\definecolor{lightgrayfill}{gray}{0.97}
\definecolor{darkgrayborder}{gray}{0.25}
\newcommand{\blochsphere}[6]{%
  \begin{scope}[shift={(#2,#3)}]
    \draw[line width=0.8pt] (0,0) circle (#4);
    \draw[line width=0.6pt] (-#4,0) -- (#4,0); 
    \draw[line width=0.6pt] (0,-#4) -- (0,#4); 
    \pgfmathsetmacro{\th}{#5}
    \pgfmathsetmacro{\R}{#4}
    \pgfmathsetmacro{\vx}{\R*sin(\th)}
    \pgfmathsetmacro{\vz}{\R*cos(\th)}
    \draw[-{Stealth[length=3mm,width=2mm]}, line width=0.7pt] (0,0) -- (\vx,\vz);
    \node[font=\footnotesize] at (0,#4+0.26) {$|1\rangle$};
    \node[font=\footnotesize] at (0,-#4-0.26) {$|0\rangle$};
    \node[font=\scriptsize] at (0,-#4-0.55) {#6};
  \end{scope}
}
\newtheorem{lem}{Lemma}
\def\id{\mathds{1}}
\newcommand{\swap}{\text{SWAP}}
\def\h{\hat H}
\newcounter{sectionnum}
\newcommand{\sectionnum}{\stepcounter{sectionnum}\thesectionnum}
\newcommand{\mg}[1]{{\color{orange}}}
\begin{document}
\title{Double-Bracket Algorithmic Cooling}

\author{Mohammed Alghadeer}  
\email{mohammed.alghadeer@physics.ox.ac.uk}
\affiliation{Department of Physics, Clarendon Laboratory, University of Oxford, Oxford, OX1 3PU, United Kingdom}
\author{Khanh Uyen Giang}
\affiliation{School of Physical and Mathematical Sciences, Nanyang Technological University, 637371, Singapore}
\author{Shuxiang Cao}
\affiliation{Department of Physics, Clarendon Laboratory, University of Oxford, Oxford, OX1 3PU, United Kingdom}
\author{Simone D. Fasciati}
\affiliation{Department of Physics, Clarendon Laboratory, University of Oxford, Oxford, OX1 3PU, United Kingdom}
\author{Michele Piscitelli}
\affiliation{Department of Physics, Clarendon Laboratory, University of Oxford, Oxford, OX1 3PU, United Kingdom}
\author{Nelly Ng}
\affiliation{School of Physical and Mathematical Sciences, Nanyang Technological University, 637371, Singapore}
\affiliation{Centre for Quantum Technologies, Nanyang Technological University, 50 Nanyang Avenue, 639798 Singapore}
\author{Peter J. Leek} 
\affiliation{Department of Physics, Clarendon Laboratory, University of Oxford, Oxford, OX1 3PU, United Kingdom}
\author{Marek Gluza}
\affiliation{School of Physical and Mathematical Sciences, Nanyang Technological University, 637371, Singapore}
\author{Mustafa Bakr} 
\email{mustafa.bakr@physics.ox.ac.uk}
\affiliation{Department of Physics, Clarendon Laboratory, University of Oxford, Oxford, OX1 3PU, United Kingdom}

\begin{abstract}
Algorithmic cooling shows that it is possible to locally reduce the entropy of a qubit belonging to an isolated ensemble such as nuclear spins in molecules or nitrogen-vacancy centers in diamonds. In the same physical setting, we introduce double-bracket algorithmic cooling (DBAC), a protocol that systematically suppresses quantum coherence of pure states. DBAC achieves this by simulating quantum imaginary-time evolution through recursive unitary synthesis of Riemannian steepest-descent flows and it utilizes density-matrix exponentiation as a subroutine. This subroutine makes DBAC a concrete instance of a dynamic quantum algorithm that operates using quantum information stored in copies of the input states. Thus, the circuits of DBAC are independent of the input state, enabling the extension of algorithmic cooling from targeting entropy to quantum coherence without resorting to measurements. Akin to Nernst’s principle, DBAC increases the cooling performance when including more input qubits which serve as quantum instructions. Our work demonstrates that dynamic quantum algorithms are a promising route toward new protocols for foundational tasks in quantum thermodynamics.
\end{abstract}

\maketitle

\section{Introduction}

A century ago, Walther Nernst pointed out that it is impossible to cool a system perfectly to absolute zero in a finite number of operations~\cite{nernst1912}. At that time, the arrival of quantum technologies could not have been anticipated.
Even though it preceded quantum mechanics, it was in fact Nernst's unattainability principle that prompted Einstein to argue that a quantum theory, rather than a classical theory, is necessary to understand matter at extremely low temperatures~\cite{Einstein1913}. 
More recently, rigorous theoretic approaches showed that Nernst’s law features centrally in the quantum regime as it can be proven from general assumptions~\cite{wilming2017third, masanes2017general}.
Today, the next major area to explore the validity of Nernst's phenomenology is quantum computing.
If quantum computers could cool a simulated many-body system to its ground state in finite time then they would be violating an empirical law of thermodynamics that stood for a century.
However, contemporary computational complexity theory indicates that quantum computers cannot achieve this~\cite{QMA}.
In this sense, quantum computing resembles other physical processes and appears to obey Nernst's principle as well.

This connection between quantum computing and laws of thermodynamics can be explored further, by explicitly specifying cooling steps that are both physically relevant and as general as possible, followed by showing that perfect cooling requires unbounded effort.
Algorithmic cooling (AC)~\cite{schulman05, Raeisi_2015, alhambra2019heat, Park2016} is a paradigmatic framework for this because, as opposed to cooling by dissipating entropy to the environment~\cite{Kraus_2008}, it redistributes entropy within the system to cool a part of it.
In this setting, a fundamental limit to cooling can be derived, and the form of the optimal transformation is known~\cite{schulman05,Raeisi_2015,alhambra2019heat,Park2016}.

In this work, we combine two novel quantum algorithms, density-matrix exponentiation~\cite{lloyd2014quantum,kjaergaard,kimmel,Wei2023hermpreserving} and double-bracket quantum imaginary-time evolution~\cite{dbqite}, to construct a protocol in the spirit of algorithmic cooling, which enables probing the unattainability principle in an extreme regime. 
Specifically, we consider $n$ independent identical qubits which have no entropy but, as is found in NMR and NV-centers, may appear to have non-zero entropy if the observer is incapable of non-demolition measurements. 
Our algorithm, double-bracket algorithmic cooling (DBAC), achieves cooling of one qubit despite this lack of knowledge. 
As opposed to previous algorithmic cooling constructions, our proposal is naturally compatible with typical native interactions of qubits, making quantum compilation tailored to the quantum architecture.
The protocol is dynamic, namely, its operation is programmed at runtime by the encountered state of the system. 
Perfect operation requires infinitely many instructions, which effectively recovers Nernst's statement that perfect cooling would require infinite resources. 
We present the first experimental implementation of a dynamic quantum algorithm applied to the fundamental thermodynamic task of cooling. 

This paper is organized as follows. 
In Section~\sectionnum, we introduce the DBAC protocol. 
In Section~\sectionnum, we present experimental results. 
In Section~\sectionnum, we discuss our findings  and outlook.


\begin{figure*}[t]
\centering
\resizebox{1\textwidth}{!}{
\begin{tikzpicture}
  \node[fill=white, draw=black!50, rounded corners=6pt,
        line width=1pt, inner sep=6pt] (outer) {%
    \begin{tikzpicture}[
      font=\rmfamily,
      >=Stealth,
      scale=0.8, every node/.style={scale=0.8},
      box/.style    ={draw,rounded corners=2pt,minimum width=35mm,minimum height=10mm,
                      line width=0.8pt,align=center,inner sep=3pt,fill=blue!10},
      thinbox/.style={draw,rounded corners=2pt,minimum width=45mm,minimum height=9mm,
                      line width=0.7pt,align=center,inner sep=3pt,fill=blue!5},
      infobox/.style={draw,rounded corners=3pt,minimum width=40mm,minimum height=12mm,
                      line width=0.8pt,align=center,inner sep=4pt,font=\footnotesize}
    ]


    \providecommand{\blochfill}{none} 
    
    \renewcommand{\blochsphere}[6]{%
      \fill[\blochfill] (#2,#3) circle (#4); 
      \draw[thick] (#2,#3) circle (#4);
      \draw[thick] (#2,#3) ellipse ({#4} and {0.3*#4});
      \draw[dashed,gray,thick] (#2-#4,#3) -- (#2+#4,#3);
      \draw[->,red,thick] (#2,#3) -- ++({#4*0.8*cos(#5)},{#4*0.8*sin(#5)});
      \draw[->,line width=0.8pt] (8.8,5.7) -- (8.8,4.3); 
      \draw[->,line width=0.8pt] (8.8,2.7) -- (8.8,1.3); 
      \node[below,font=\scriptsize] at (#2,{#3-#4-0.3}) {#6};
      \node[right,font=\scriptsize] at ({#2+#4+0.001},#3) {$|1\rangle$};
      \node[left,font=\scriptsize]  at ({#2-#4-0.001},#3) {$|0\rangle$};
}

    \node[infobox,fill=brown!20] (advantage) at (-1.5,7.7) {Instruction copies of $|\psi\rangle$ drive $U_{\rm DME}$ with $e^{\pm it\hat H}$ echoes; \\ recursion scales resources and increases cooling};

    \node[infobox,fill=orange!20] (advantage) at (4.5,7.7) {\textbf{Key Advantage:}\\Direct coherence manipulation\\without measurement dephasing};

    \node[infobox,fill=brown!20] (nernst) at (-1,-0.7) {\textbf{Nernst Principle:}\\Perfect cooling ($T=0$) requires\\infinite resources ($M \to \infty$)};

    \node[box] (copies) at (-7.5,5.0) {Instruction copies $|\psi\rangle$};

    
    \node[thinbox,fill=cyan!10] (em)   at (-1.0,5.8) {$e^{-it\hat H}$};
    \node[thinbox,fill=cyan!10] (udme) at (-1.0,4.5) {$U_{\rm DME}=e^{-it(X\!\otimes\!X+Y\!\otimes\!Y+Z\!\otimes\!Z)/2}$};
    \node[thinbox,fill=cyan!10] (ep)   at (-1.0,3.2) {$e^{+it\hat H}$};

    \node[font=\small] at (-1.0,2.25) {Resources grow with steps: $M_1, M_2, \ldots$};

    \node[box,fill=purple!15] (target) at (4.5,4.5) {Target qubit};

    \node[infobox,fill=gray!15] (comparison) at (-7.5,7.7) {\textbf{vs. Traditional Methods:}\\HBAC: Requires measurements \\ and a number of entangling gates \\ even for small $n$\\  
    DBAC: Native ZZ interactions
    };

    \node[box] (fresh) at (-7.5,1.2) {Input copies of $|\psi\rangle$};
    \node[box,fill=teal!20] (udme2) at (-1.0,1.2) {$U_{\rm DME}$};

    \draw[->,line width=0.8pt] (copies.east) -- ++(1.,0) |- node[above, font=\small, pos=0.5] {DME step} (em.west);
    \draw[->,line width=0.8pt] (em.south) -- (udme.north);
    \draw[->,line width=0.8pt] (udme.south) -- (ep.north);
    \draw[->,line width=0.8pt] (ep.east) -- (target.west);

    \draw[->,line width=0.8pt] (fresh.east) -- node[above, font=\small] {iterate (recursion)} (udme2.west);
    \draw[->,line width=0.8pt] (fresh.north) -- (copies.south);

    
    \renewcommand{\blochfill}{blue!15}
    \blochsphere{b1}{7.7}{6.5}{1}{55}{Before}
    
    \renewcommand{\blochfill}{teal!20}
    \blochsphere{b2}{7.7}{3.5}{1}{30}{After one step}
    
    \renewcommand{\blochfill}{purple!15}
    \blochsphere{b3}{7.7}{0.5}{1}{18}{After two steps}
    
    \renewcommand{\blochfill}{none}

    \draw[->,line width=0.8pt] (udme2.east) -- (3.5,1.2) -- (3.5,4);

    \node[font=\footnotesize,color=black!70!black] at (3.9,5.8) {\textbf{No measurements}};
    \node[font=\footnotesize,color=black!70!black] at (3.4,5.5) {\textbf{required!}};

    \end{tikzpicture}%
  };
\end{tikzpicture}}
\caption {Schematic of double-bracket algorithmic cooling (DBAC). The protocol applies the density-matrix exponentiation unitary $U_{\rm DME}$ between a target data qubit and input instruction copies of $|\psi\rangle$. Each application is bracketed by $e^{\pm it\hat H}$ echoes, producing a double-bracket step that reduces the energy of the target qubit. Because the applied unitary operations are state-agnostic, the cooling dynamics are programmed “on the fly” by the instruction copies themselves, without requiring mid-circuit measurements. DBAC thus generalizes algorithmic cooling from entropy reduction to coherent state manipulation, and -consistent with the Nernst unattainability principle- perfect cooling would require infinitely many instruction qubits. The flow is as follows: input copies of $|\psi\rangle$ are promoted to instruction copies, which together with the target qubit undergo the $U_{\rm DME}$ operation bracketed by echoes; the updated target qubit is thereby cooled, and recursion repeats with further instruction copies ($M_1,M_2$,…) as resources scale.
}
\label{fig:dbac}
\end{figure*}

\section{Double-bracket algorithmic cooling (DBAC)}
The objective of AC is to perform a unitary operations on $n$ qubits which are prepared by thermal contact to the same bath such that one of those qubits will have as little entropy as possible. This local entropy reduction is limited, analogously to Nernst's principle~\cite{Park2016,OverviewAACPhysRevA.110.022215}.
In this work we aim to uncover an even more striking appearance of such limitations in a quantum system.
We will do that by considering a system with no entropy but including quantum superpositions.
Quantum coherence~\cite{streltsov2017colloquium} quantifies the amount of quantum superpositions in a quantum state.
The difficulty in manipulating quantum coherence is that measurements collapse superpositions.
Moreover, a quantum state in a superposition generally appears as a mixed state upon measurement. 
However, we will demonstrate that our quantum algorithm enables a process in which $n$ qubits prepared in the same pure state $\ket{\psi}$ can be used to reduce the quantum coherence of a selected qubit.
This will generalize AC from entropy to quantum coherence. The former can have a classical rather than quantum source while the latter has an exclusively quantum origin.

Our discussion is particularly relevant if measurements are destructive and in every round of an experiment $
\ket \psi$ changes.
In this case if one would average the measurement outcome of each qubit then the repeated qubit state would appear mixed~\cite{Nagasawa_2024}.
Quantum coherence renders standard AC protocols inapplicable as they typically assume states in form of classical mixtures of $\rho = p\ket 0\bra0+(1-p) \ket1\bra1$ and an active dephasing operation may not be available.
Moreover, dephasing would add entropy making cooling to a perfect pure state theoretically impossible for AC~\cite{Park2016,OverviewAACPhysRevA.110.022215}.
In contrast, a single-qubit rotation could potentially rotate any of the qubits to a perfect ground state $\ket\psi\rightarrow \ket 0$.
An immediate idea for that is to perform tomography on the qubits and compile this rotation using information gained from measurements.
However, Nernst's phenomenology continues to hold: even with perfect measurements, infinitely many copies of the state would be needed to find the exact parameters for the operation.
This makes the tomographic approach inapplicable for systems with few qubits and those without fast feedback loops.

We tackle this problem by introducing an algorithm that can perform an operation analogous to a single qubit rotation to ground state $\ket 0$ but without resorting to tomography. 
The basic idea is that imaginary-time evolution (ITE) defined as
\begin{equation} \label{eq:ite}
    \ket{\psi(\tau)} =  \frac{e^{-\tau \h}\ket{\psi} }{\|e^{-\tau \h}\ket{\psi}\|}\
    ,
\end{equation}
maps any pure initial state to another pure state and, importantly the resulting state $\ket{\psi(\tau)}$ becomes increasingly closer to the ground-state of the Hamiltonian $\h$, converging for large ITE durations $\tau\rightarrow\infty$.
This is true largely irrespective of the input state, as long as it is not a perfect excited eigenstate.
Our objective is similar, as we want to rotate the qubit to $\ket 0$ in computational basis without knowing the input state and so we will choose the Hamiltonian in ITE to be $\h=-Z$ where $Z$ is the Pauli-$Z$ matrix. 

Here, we present double-bracket algorithmic cooling (DBAC) algorithm, that uses recent results in quantum computing for implementing ITE obliviously, i.e. without knowledge about the encountered quantum information~\cite{gluza2024double,qdp}.
To achieve that, similarly to AC, we consider $n$ copies of $\ket \psi$ and, unlike AC, DBAC processes the encountered quantum states coherently to avoid the need for, otherwise, mid-circuit measurements.
Requiring $n>1$ copies is in line with a Nernst-like limitation, i.e. DBAC converges to the perfect ground state only in the limit $n\rightarrow\infty$.
We next explain the DBAC procedure whose ingredients are illustrated in Fig.~\ref{fig:dbac}.

Since the ITE state is well normalized for any $\tau$, the state at time $\tau$ can, in principle, be obtained from the initial state $\ket{\psi}$ through a unitary. 
As shown in Refs.~\cite{dbqite,dbqsp,dbgrover}, this unitary takes a double-bracket form $U(\tau)= e^{t(\tau)[\ket{\psi}\bra{\psi},\h]}$, that is, it has the form of an exponential of a commutator. In the corresponding Heisenberg equations of motion this leads to a double commutator, which motivates the term double-bracket algorithmic cooling (DBAC).
Here $t(\tau)$ is a rescaling of ITE duration, and unlike $\tau$, it can be interpreted as the duration of a physically occurring evolution, since $\hat W = i[\ket{\psi}\bra{\psi},\h]$ formally defines a valid Hamiltonian.
While $U(\tau)$ cannot be directly implemented on quantum computers, it can be approximated using so-called group commutator formulas~\cite{gluza2024double}.
For qubits, this approximation can lead to the familiar structure of Grover’s search algorithm~\cite{dbgrover}. 
Using the group commutator we define the first DBAC step as
\begin{align}\label{eq: GCI state main}
\ket{\psi_\text{DBAC}} =e^{it\h} e^{it\ket{\psi}\bra{\psi}}   
    e^{-it\h}\ket{\psi}\ .
\end{align}
Using mathematical bounds for small $\tau$, we have $\ket{\psi(\tau)} \approx \ket{\psi_\text{DBAC}}$~\cite{dbqite,dbgrover,gluza2024double}.
For arbitrary $\tau$, and since we are dealing with a single qubit, a stronger result holds: the initial energy is given by
\begin{align}    
E_\text{DBAC} = E_0-2\sin^2(t)(1-E_0^2)\Big((1-\cos(t)) E_0 + \cos(t)\Big)\ .
\label{eq energy}
\end{align}
as derived in App.~\ref{app:energy}.
The minimum of $E_\text{DBAC}$ gives the ultimate limit on cooling by the ITE approximation $\ket{\psi_\text{DBAC}}$, paralleling the limits found in conventional AC.
Thus our definition achieves the requirements of processing quantum information both obliviously and coherently: In Eq.~\eqref{eq: GCI state main} there are two single qubit rotations with the Hamiltonian $\h=Z$ and the reflection around the input state $\ket \psi$ as in the Grover algorithm which is key for the energy reduction.

Implementing DBAC requires applying unitary operations given by $U_\psi = e^{it\ket{\psi}\bra{\psi}}$.
When $t = \pi$ then these are the familiar Grover's reflections
$U_\psi = \id - 2\ket \psi\bra\psi$.
However, unlike in Grover's algorithm we assume that we do not know the state $\ket{\psi(0)}$ and yet want to cool it.
Instead, we will use that for any two qubits, in states $\rho$ and $\sigma$, respectively, for $t\ll 1$, and then we have
\begin{align}\label{eq:channel}
	e^{-it\rho} \sigma e^{it\rho} &\approx \text{Tr}_{1}\left[U_{\text{DME}}(\rho\otimes\sigma) U_{\text{DME}}^\dagger\right],
\end{align}
where $U_{\text{DME}} = e^{-it (X\otimes X+Y\otimes Y + Z\otimes Z)/2}$ is the Heisenberg interaction of the two qubits and $X, Y$ and $Z$ are the Pauli matrices. 
This prescription is called density-matrix exponentiation (DME)~\cite{lloyd2014quantum,kjaergaard,kimmel,Wei2023hermpreserving}.
It is dynamic because the physically executed operation $U_{\text{DME}}$ is independent of $\rho$ and $\sigma$ but the quantum information contained in $\rho$ decides what rotation is applied to $\sigma$.
Repeating DME iteratively with multiple copies of $\rho$ allows us to extend the duration $t$.
Thus, the DBAC protocol proceeds as follow: 1) Apply the Hamitonian on the initial state $\ket{\sigma}= e^{-it\h}\ket\psi$ of the data qubit, 2) Perform an appropriate number of DME interactions with instruction qubits, and 3) After $\sigma = \ket\sigma\bra\sigma$ has been reflected around $\ket \psi$ by DME, perform the final rotation $e^{it\h}$ from Eq.~\eqref{eq: GCI state main}.

Bringing in an increasing number of instruction qubits $\rho$ to use in iterative steps to improve cooling performance is  reminiscent of other AC protocols.
The key distinction, however, is that DBAC process the coherent quantum information of pure states without prior knowledge of their content, i.e. it operates obliviously.
This allows us to probe Nernst’s unattainability principle in a new, complementary setting: manipulating quantum coherence rather than mixed-state populations. 
The only entangling operation required is the Heisenberg interaction, in contrast to the more complicated multi-qubit unitaries often used in AC. 
As with entropy-based cooling, a single DBAC step has a finite cooling limit, and perfect implementation would require infinitely many DME steps - another direct analogue of Nernst’s principle. 
Consequently, reaching the ground state demands increasing resources.

One natural way to boost cooling is to apply DBAC recursively. 
After producing $\ket{\psi_{\mathrm{DBAC}}}$ in the first step, we can then define the second step of the recursion as
\begin{align}\label{eq: dbac-2}
  \ket{\psi_\text{DBAC-2}} =e^{it\h} e^{it\ket{\psi_\text{DBAC}}\bra{\psi_\text{DBAC}}}   
    e^{-it\h}\ket{\psi}\ .
\end{align}  
We anticipate that if preparing $\ket{\psi_\text{DBAC-2}}$ requires $M_2$
DME steps, then $M_2$ copies of $\ket{\psi_\text{DBAC}}$ must be prepared in parallel.
If preparing $\ket{\psi_\text{DBAC}}$ requires $M_1$ steps of DME, the recursive protocol required $M = M_1\cdot M_2$ copies of the input state $\ket \psi$.
Similar recursive structures have been considered in other AC protocols as well.

\section{Experimental results}

In DBAC adding more qubits yields an improvement, akin to Nernst's principle.
This, together with DBAC involving two-qubit interactions only, makes DBAC appealing for exploring new experimental setups. Here, we demonstrate DBAC on a superconducting quantum lattice implemented in a 3D-integrated cQED architecture in which universal quantum control~\cite{spring2022high, PhysRevLett.133.120802, fasciati2024complementing, bakr2025intrinsic, bakr2025longran}, multiplexed readout~\cite{bakr2024multiplexed, cao2025readoutenhanced}, and algorithms~\cite{Cao_2024, Caoqtr2024} have been demonstrated. The lattice consists of 16 transmon qubits
which allows selecting various subsets of qubits~\cite{alghadeer2025low, alghadeer2025characterization} (See supplementary materials).

Native two-qubit interactions between fixed-frequency transmons form a fundamental advantage of implementing DBAC steps on superconducting qubits. Such native interactions are fast and suffer from less coherent errors compared to fully calibrated two-qubit gates. We implement these entangling operations across the lattice by using the Stark-induced ZZ by level excursions (siZZle) technique \cite{sizzlePRL, PhysRevLettMitchell, bakr2024dynamic} to boost static ZZ coupling (see Fig.~\ref{siZZle_Calibration}a). We observed a large range of siZZle parameters (Fig.~\ref{siZZle_Calibration}b and c) allowing to tune up ZZ interaction strength over various phase parameters to realize the entangling operation $R_{ZZ}(\phi)=e^{-i\phi Z\otimes Z}$.
See supplementary materials for more details on the calibration procedure.

\begin{figure}[H]
  \centering
   \includegraphics[width=0.42\textwidth]{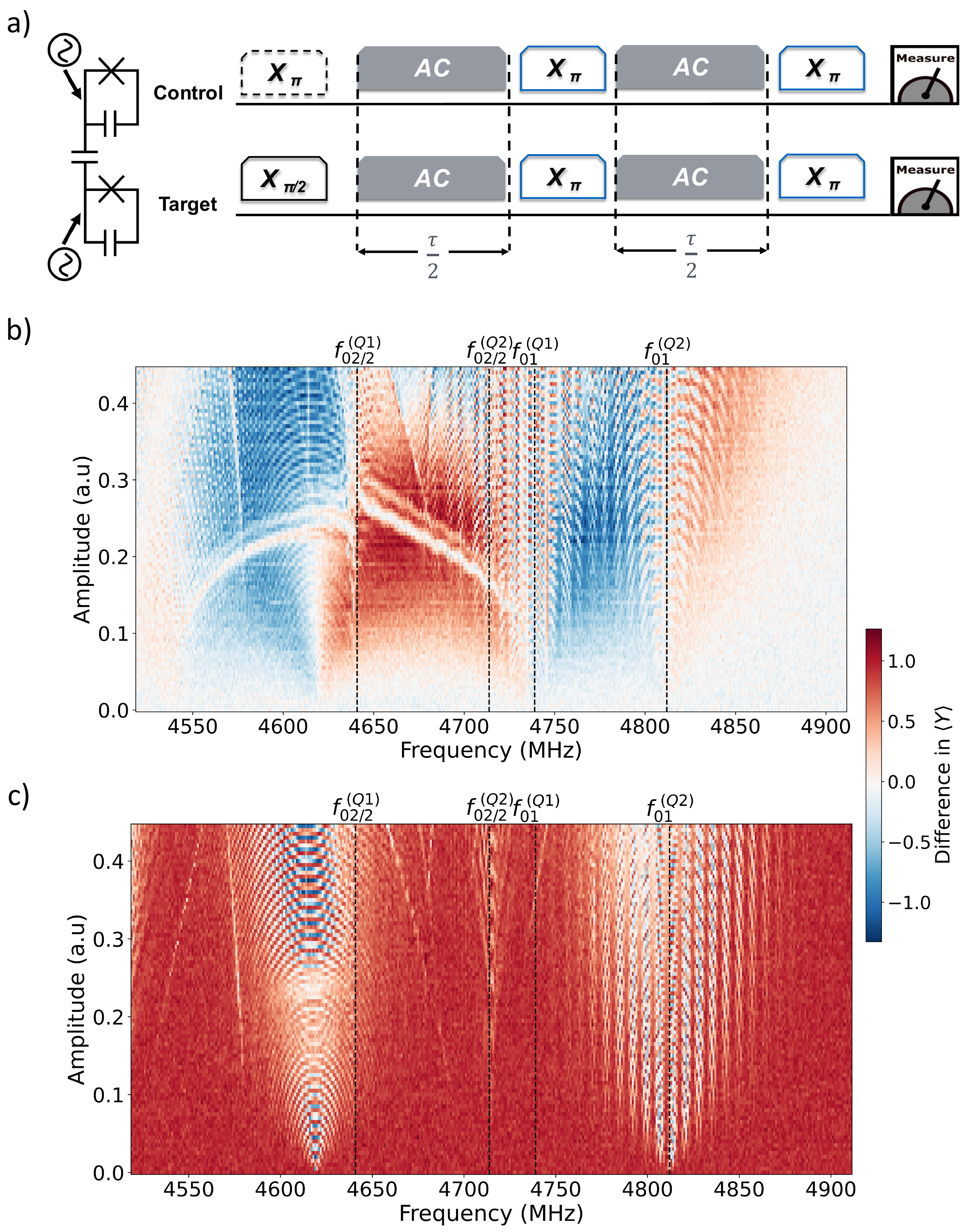}
  \caption{a) Pulse sequence used to calibrate two-qubit $ZZ$ interaction via an echoed Stark-drive protocol. An off-resonant microwave drive (AC Stark tone) is applied to one qubit during free-evolution intervals of total duration $\tau$ (split into $\tau/2$ segments), while interleaved $\pi$ echo pulses on both qubits are used to cancel isolated $IZ$ and $ZI$ rates. This sequence implements the effective unitary $R_{ZZ}(\phi)=e^{-i\phi Z\otimes Z/2}$, where $\phi$ depends on both drives amplitude, frequency, and phase difference. (b) Two-dimensional sweep of Stark-drive frequency and amplitude, showing the response of the target qubit, which is used to tune up the effective $ZZ$ coupling by mapping native drive parameters to the accumulated phase $\phi$. (c) Corresponding measurement on the control qubits, capturing the differential phase accumulation due to the $ZZ$ coupling. These scans are used to select the operating points (Stark-drive drives frequency and amplitude) that implement the effective $R_{ZZ}(\phi)$ operation in the subsequent DBAC experiments.} 
  
  \label{siZZle_Calibration}
\end{figure}

We found that using native $ZZ$ interactions allows to blur the line between quantum computation and quantum simulation: imperfections of experimental assignment of siZZle pulse duration lead merely to the spread of the $ZZ$ interaction duration $R_{ZZ}(\phi+\delta \phi) = e^{-i(\phi+\delta \phi) Z\otimes Z}$, and after we compose it together with single qubit rotations we arrive at continuous deformation of the DME unitary time. This contrasts with unitary synthesis of $U_\text{DME}$ using 3 CZ gates as was demonstrated in Ref.~\cite{kjaergaard} where small miscalibrations in single-qubit or CZ gates can strongly distort the resulting unitary, rather than rescaling of the interaction duration (see Fig.~\ref{DME_Data} for process tomography and supplementary materials for extended results).
Through numerical simulations, we found that the performance of DBAC is robust to changes of the evolution time $t$ in $U_\text{DME}(t)$.

\begin{figure}[H]
    \centering
    \includegraphics[width=.823\linewidth]{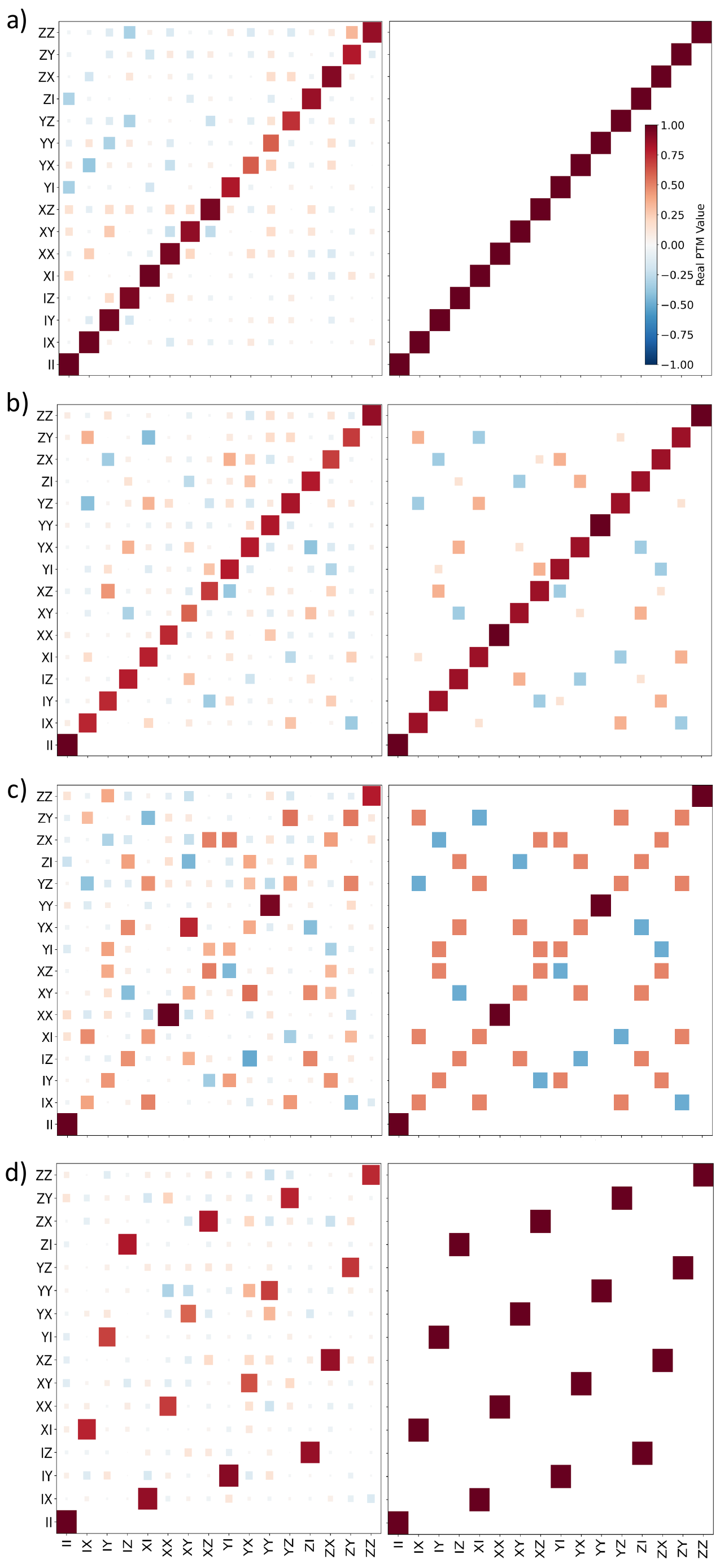}
    \caption{
    Pauli transfer matrix (PTM) characterization of DME compilation obtained via standard quantum process tomography~\cite{nielsen00,Emerson2007}. Panels show experimental PTMs (left) alongside analytic PTMs (right) for different evolution angles: \textit{a)} $\phi=0$, where the $ZZ$ interaction is not involved and deviations from theory arise from readout errors, \textit{b)} $\phi=\pi/8$, \textit{c)} $\phi=\pi/4$, and \textit{d)} $\phi=\pi/2$. The corresponding average process fidelities are \textit{a)} $92.27\%$, \textit{b)} $91.78\%$, \textit{c)} $91.33\%$, and \textit{d)} $90.20\%$. For $\phi=\pi/2$, the DME realizes a SWAP operation (see supplementary materials for more details).
    }
    \label{DME_Data}
\end{figure}

\begin{figure}[H]
  \centering

  \begin{subfigure}[t]{0.42\textwidth}
    \centering
    \resizebox{\linewidth}{!}{%
      \begin{tikzpicture}
              {%
          \begin{tikzpicture}
            \node[anchor=south west, inner sep=0] (image) at (0,0)
              {\includegraphics{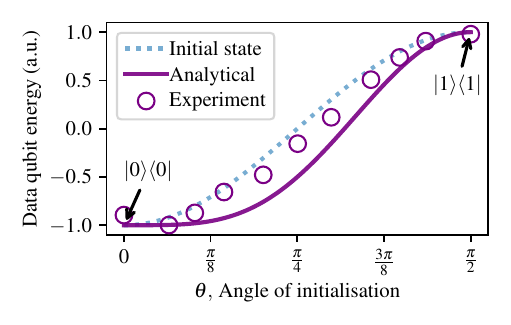}};
            \node[anchor=south, scale=1, yshift=-1.2em, xshift=1em] at (image.north)
              {\mbox{\input{circuits/circuit_A}}};
            \node[anchor=south, yshift=4em] at (image.north)
              {a) 1 DBAC step, 1 DME step};
          \end{tikzpicture}%
        };
      \end{tikzpicture}%
    }
  \end{subfigure}
  \hspace{1.5cm} 
  \hfill
  \begin{subfigure}[t]{0.45\textwidth}
    \centering
    \resizebox{\linewidth}{!}{%
      \begin{tikzpicture}
              {%
          \begin{tikzpicture}
            \node[anchor=south west, inner sep=0] (image) at (0,0)
              {\includegraphics{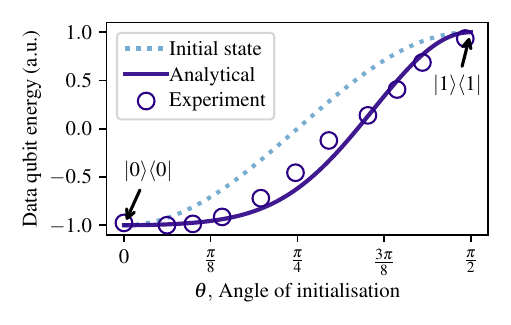}};
            \node[anchor=south, scale=0.9, yshift=-1.2em, xshift=1em] at (image.north)
              {\mbox{\input{circuits/circuit_B}}};
            \node[anchor=south, yshift=5em] at (image.north)
              {b) 1 DBAC step, 2 DME steps};
          \end{tikzpicture}%
        };
      \end{tikzpicture}%
    }
  \end{subfigure}
  \hspace{1.5cm} 
  \hfill
  \begin{subfigure}[t]{0.45\textwidth}
    \centering
    \resizebox{\linewidth}{!}{%
      \begin{tikzpicture}
              {%
          \begin{tikzpicture}
            \node[anchor=south west, inner sep=0] (image) at (0,0)
              {\includegraphics{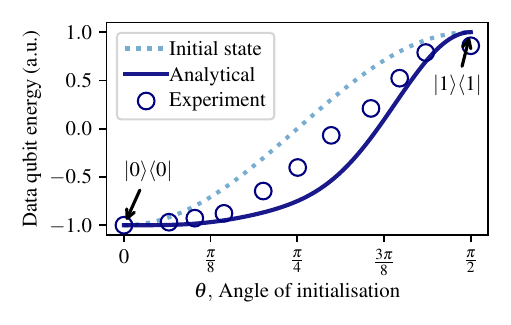}};
            \node[anchor=south, scale=0.77, yshift=-1.2em, xshift=1em] at (image.north)
              {\mbox{\input{circuits/circuit_C}}};
            \node[anchor=south, yshift=6em] at (image.north)
              {c) 2 DBAC steps, 1 DME step};
          \end{tikzpicture}%
        };
      \end{tikzpicture}%
    }
  \end{subfigure}

  \caption{DBAC as a function of initialization angle $\theta$ (i.e., apply DBAC to $\ket{\psi}=R_X(\theta)\ket 0$) for \textit{a)} the minimal case with $n=2$ qubits, \textit{b)} with $n=3$ qubits, and \textit{c)} with $n=4$ qubits. Experimental data show that distributing DME can improve cooling, with a flattening of the energy curve. Analytic simulations predict further cooling with more DBAC steps, but decoherence places a break-even limit on our device.}
  \label{fig:dbac-3-circuits}
\end{figure}

In each step of DBAC, we have the freedom to optimize the step duration, which can be selected to bring a specific portion of the Bloch sphere within a target fidelity to the ground state.
Based on the optimization, we found $t=\pi/4$ for the first and second DBAC steps yields cooling across a wide range of initial states (see Fig.~\ref{fig:dbac-3-circuits}a) for $n=2$ qubits).
In Fig.~\ref{fig:dbac-3-circuits}b), we show that the amount of cooling is larger when DME is implemented with 2 instruction qubits instead of 1, and indeed our experimental data for this case shows the best DBAC performance in agreement with the analytic simulations.
Fig.~\ref{fig:dbac-3-circuits}c) shows that performing 2 DBAC steps lead to even larger cooling, but despite being implemented by 
more qubits ($n=4$), we found that the data is at the verge of break-even due to increased circuit depth and our relatively long durations of the siZZle pulses.

\section{Discussion and Conclusion}
Nernst’s unattainability principle, formulated in 1912, states that “it is impossible for any procedure to lead to the isotherm $T=0$ in a finite number of steps”~\cite{nernst1912}. 
In the quantum regime, an extreme scenario is to consider $n$ qubits all in the same pure state $\ket{\psi}$ and ask: 
How to transform at least one qubit into the ground state $\ket{0}$?
A natural approach would be to perform tomography by making measurements on $n-1$ qubits, where to improve the precision of the rotation to ground state of the $n$-th qubit requires increasing the total number of qubit $n$. However, the central limit theorem dictates that the statistical error of  an estimator such as $\langle X\rangle$ decreases only as $1/\sqrt{n}$ and even adaptive online learning methods need large sample sizes~\cite{MarcoRegret}. 

This increase in number of copies for improving precision illustrates in a quantum setting the same unattainability principle: approaching the exact ground state demands asymptotically increasing resources. In contrast, we presented a coherent approach reminiscent of algorithmic cooling, but instead of reducing entropy from the target qubit, DBAC removes quantum coherence.
Our approach is dynamic~\cite{kjaergaard} and uses density matrix exponentiation to encode quantum information stored in copies of the qubits into a quantum operation.
This operation uses the pure density matrix $\ket\psi\bra\psi$ of the qubits at a specific moment as a temporary Hamiltonian to govern the target qubit evolution.
While dynamic quantum algorithms can perform universal quantum computation~\cite{kimmel}, DBAC is, to our knowledge, the first protocol where the dynamic aspect naturally aligns with the objective and has a clear physics utility.
The dynamic density-matrix exponentiation is interlaced with echo operations involving the Hamiltonian whose ground state is the target state. 
The algorithm's performance is grounded in Riemannian gradient flows and imaginary-time evolution~\cite{dbqite}, but also we proved analytical formulas on the ultimate cooling limit of one such echoing step.
By forming a recursion~\cite{gluza2024double,qdp,dbqite,dbgrover}, DBAC can converge towards the perfect ground state at the cost of increasing number of qubits $n$,
a trade-off reminiscent of Nernst's principle and similar to limits in other AC protocols.

We implemented DBAC experimentally on a superconducting lattice of transmon qubits and investigated its recursive application to cooling operations.
This demonstrates, for the first time, the utility of employing quantum dynamic programming~\cite{qdp}.
We found that DBAC stands out as particularly accessible for experiments which can be shown by comparison with the paradigmatic HBAC protocol (see App.~\ref{app:E}).
To operate in our scenario, HBAC requires a dephasing operation before proceeding to remove the resultant increase in system entropy. 
In contrast, DBAC directly removes quantum coherence.
Even in the minimal $n=3$ case, the entropy removal process by HBAC involves 3 Toffoli gates, which translates generally to about two dozen CZ gates.
In our experiment, running a single step of DBAC achieved comparable process fidelity to a single CZ gate, effectively rendering HBAC impractical. 
This advantage stems from our compilation strategy, which was rooted in utilizing native ZZ interactions, allowing to reduce the physical runtime of the protocol.
Unlike CZ-based compilation, precise calibrations are not as essential since DBAC performance depends continuously on durations of ZZ interactions.

The primary application of HBAC has been NMR and NV-center systems, where it is used for removing mixedness of qubits as represented by pseudo-pure states $\rho = \tfrac p2 \id +(1-p)\ket\psi\bra\psi$~\cite{pseudopure1,pseudopure2}.
In this setting DBAC would yield states $\rho' \approx \tfrac p2 \id +(1-p)\ket0\bra0$ which for generic $\ket \psi$ have decreased polarization compared to dephasing of $\rho$  and thus, after coherence removal by DBAC, the cooling limit of HBAC is reduced~\cite{CorrelationEnahncedAC,OverviewAACPhysRevA.110.022215,AchievablePolarizationAC}.
It is not a disadvantage that DBAC cannot increase purity of initially mixed states, since this can be performed by HBAC or  probabilistic protocols requiring entangled measurements~\cite{Cirac1999_purification}; instead, DBAC enables the complementary capability to reduce quantum coherence.

DBAC's dynamic character is key: Copies of input states act as quantum instructions which program the operation of the algorithm~\cite{kjaergaard}.
DBAC is a minimal example of applying quantum dynamic programming~\cite{qdp} to a double-bracket quantum algorithm~\cite{gluza2024double}.
For multi-qubit states $\ket \psi$, double-bracket quantum imaginary-time evolution has convergence guarantees rooted in Riemannian steepest-decent flows and thus AC of quantum coherence can be generalized to larger quantum systems.
By this, DBAC extends the utility of dynamic quantum algorithms and paves the way for using them to tackle foundational objectives of quantum thermodynamics.

\onecolumngrid

\section*{Acknowledgments}
M.B. acknowledges support from EPSRC QT Fellowship grant EP/W027992/1, and EP/Z53318X/1.
P.L. acknowledges support from EP/N015118/1, and
EP/T001062/1. KG, MG and NN are supported by the start-up grant of the Nanyang Assistant Professorship, and the Ministry of Education, Singapore under its Academic Research Fund Tier 1 (RT1/23). 
Additionally, KG and MG acknowledge funding by the Presidential Postdoctoral Fellowship of the Nanyang Technological University.
We would like to acknowledge the Superfab Nanofabrication facility at Royal Holloway, University of London, and Optoelectronics Research Centre at University of Southampton where part of device fabrication was performed. 

\section*{Data Availability}
Data and experimental code supporting the findings of this work are publicly available at
\href{https://github.com/MoGhadeer/dbac-algorithmic-cooling/tree/main}{github.com/MoGhadeer/dbac-algorithmic-cooling/tree/main}.


\onecolumngrid
\appendix
\def\1{I}

\vspace{3cm}
{\centering \textbf{Supplementary Materials for ``Double-Bracket Algorithmic Cooling"} \par}

\section{Compilation}
\subsection{DME implementation using native ZZ interactions}
In this section, we provide necessary details on the density matrix exponentiation (DME) algorithm, which is the main ingredient that executes DBAC. Given the instruction register $\rho$ and the data register $\sigma$ of the same dimension, the DME channel transforms the data registers into an approximation of the unitary $e^{-it\rho}$.
The circuit used to execute the DME circuit remains the same (see Fig.~\ref{fig:DME}), however the operation performed changes depending on the input register. 
This highlight the key feature of a dynamic quantum algorithm, which is that it is oblivious to (does not require prior knowledge of) input states. 
The DME channel $\hat{E}_t^{(\rho)}(\sigma)$ acts on the data qubit $\sigma$ according to the following relation:
\begin{align}\label{eq:channel}
	\hat{E}_t^{(\rho)}(\sigma) &= \text{Tr}_{1}\left[e^{-i t \swap}(\rho\otimes\sigma) e^{i t \swap}\right] = \sigma - it[\rho,\sigma] +O(t^2)
        = e^{-it\rho} \sigma e^{it\rho} +O(t^2)
\end{align}
where $t\in \mathbb R$ is the DME channel duration, and $\swap$ is simply defined as $\swap\ket{v}\otimes\ket{w} = \ket{w}\otimes\ket{v}$.
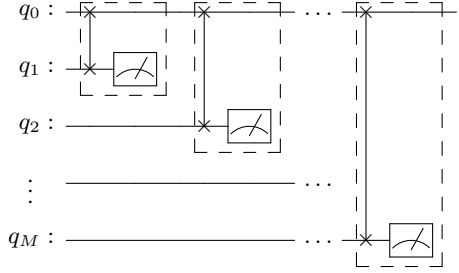
\begin{figure}[H]
    \centering
        \scalebox{1}{
        \Qcircuit @C=1.0em @R=1.0em @!R { \\
        & \lstick{{q}_{0} :  } & \qswap & \qw & \qw & \qswap & \qw & \qw & \dots & \hspace{1em} & \qswap & \qw & \qw \\
        & \lstick{{q}_{1} :  } & \qswap \qwx[-1] & \meter \\  
        & \lstick{{q}_{2} :  } & \qw & \qw &  \qw & \qswap \qwx[-2] & \meter \\
        & \vdots \hspace{3em}  & \qw & \qw & \qw & \qw & \qw & \qw & \dots & \hspace{1em} \\
        & \lstick{{q}_{M} :  } & \qw & \qw & \qw & \qw & \qw & \qw &  \dots & \hspace{1em} & \qswap \qwx[-4] & \meter
        \gategroup{2}{3}{3}{4}{.8em}{--} 
        \gategroup{2}{6}{4}{7}{.8em}{--} 
        \gategroup{2}{11}{6}{12}{.8em}{--} 
        }}
    \caption{Circuit performing $M$ iterations of density matrix exponentiation (DME). The data qubit is denoted $q_0 (= \sigma)$, while $q_1, \dots, q_M$ represent $M$ copies of the instruction qubit $\rho$. Each iteration of DME, illustrated within the dashed box, consists of a $\delta\text{-swap}$ operation followed by a trace-out procedure. In this diagram, the trace-out is represented by a mid-circuit measurement, though the qubit can be disregarded. The circuit approximates the unitary operation $e^{-it\rho}$ acting on the data qubit $q_0$.}
    \label{fig:DME}
\end{figure}

Setting $t\mapsto t/M$, we find that repeating this operation $M$ times with new copy of the input $\rho$ yields 
\begin{align}
    \left(\hat{E}_{(t/M)}^{(\rho)}(\sigma)\right)^{M} = e^{-it\rho} \sigma e^{it\rho} + O(t^2/M),
    \label{eq:error}
\end{align}
i.e. the operation converges to a partial reflector. This requires $M$ iterations of the DME channel and thus consumes $M$ copies of the state $\rho$. 
It has a similar effect as Trotterizing M times. 
We can see that the final effect of the Trotterized channel is applying the unitary $u_\rho$ to accuracy within $O(t^2/M)$ (see Fig. ~\ref{fig:DME-norm}).

\begin{figure}[H]
    \centering
    \includegraphics[width=0.6\linewidth]{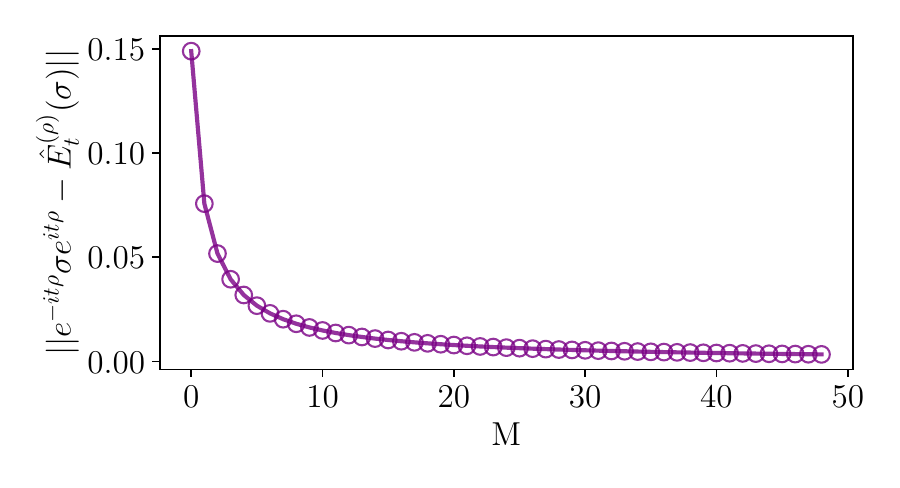}\vspace{-0.4cm}
    \caption{Performance of DME channel $\hat{E}_t^{(\rho)}(\sigma)$ after M iterations to approximate the evolution given by $e^{-it\rho} \sigma e^{it\rho}$.}
    \label{fig:DME-norm}
\end{figure} 

In this work, we employ the fractional $\swap$ gate with arbitrary duration $\delta\swap = e^{-it\swap}$, using an alternative decomposition that utilizes hardware-native interactions, rather than the standard decomposition into three CNOT gates and single-qubit unitaries ~\cite{VatanWilliams,Vidal_2004}. Specifically, the Heisenberg interaction can be leveraged:
\begin{align}
    \hat{H}_{i,j} &= X_i X_j + Y_i Y_j + Z_i Z_j,
    \label{eq:Heisenberg}
\end{align}
which allows the SWAP operator acting on qubits $i$ and $j$ to be expressed as $\swap_{i,j} = \frac{1}{2}(I + \hat{H}_{i,j})$. Thus, a time-evolved $\delta \swap$ gate can be realized via the unitary evolution:
\begin{align}
     \delta \swap_{i,j}(t) &= e^{-it \frac{1}{2}(I + X_i X_j + Y_i Y_j + Z_i Z_j)}.
\end{align}

The Heisenberg interaction $\hat{H}$ (Eq.~\eqref{eq:Heisenberg}) can be used to implement a two-qubit SWAP operation of duration $\delta t$ using a native $ZZ$ interaction (see Fig.~\ref{fig:heisenberg_circuit}). 
This approach is feasible due to the transformations $R_X\left(\pi/2\right)ZR_X\left(-\pi/2\right) = Y$ and $R_Y\left(\pi/2\right)ZR_Y\left(-\pi/2\right) = X$.
Applying $R_Y\left(\pi/2\right)$ rotations to both qubits maps a $Z \otimes Z$ interaction to $X \otimes X$, while $R_X\left(\pi/2\right)$ rotations yield a $Y \otimes Y$ interaction.
We chose $R_X$ and $R_Y$ gates to realize the Heisenberg interaction, as these are native single-qubit gates in our hardware.
It is also possible to use typical universal set of single-qubit gates such as phase gate and Hadamard gate to realize the Heisenberg interaction (see Supplementary Materials).

\subsubsection{DME function used in DBAC}

Using the fact that SWAP gate is unitary, one has
\begin{equation}
    e^{i \, \text{SWAP} \, \delta} = \cos(\delta)\, \mathbb{I} + i \sin(\delta)\, \text{SWAP}.
\end{equation}
The corresponding DME channel acting on a data qubit state $\sigma$ with an instruction qubit state $\rho$ is
\begin{align}
    \mathcal{E}_{\text{DME}}(\sigma) 
    &= \mathrm{Tr}_2 \!\left( e^{-i \, \text{SWAP} \, \delta} \, (\sigma \otimes \rho) \, e^{i \, \text{SWAP} \, \delta} \right) \\
    &= \cos^2(\delta)\,\sigma + i \cos(\delta)\sin(\delta)\,[\rho,\sigma] + \sin^2(\delta)\,\rho.
\end{align}
For clarity, we explicitly note that throughout this section we use the notation
\begin{equation}
    U_{\rm DME}(\delta) \equiv e^{i \, \delta \, \text{SWAP}},
\end{equation}
so that subsequent expressions are written in terms of $U_{\rm DME}$.
\\
\subsubsection{DME Circuit}
For convenience, we now denote
\begin{equation}
    U_{\rm DME}(\phi) \equiv e^{i \phi \, \text{SWAP}},
\end{equation}
so that the circuit representations below are written in terms of $U_{\rm DME}$ rather than the exponential form. 

\begin{center}
    \begin{minipage}{\textwidth}
    \centering
    $\vcenter{\hbox{\scalebox{0.9}{\Qcircuit @C=1em @R=0.8em @!R { \\
         & \multigate{1}{\mathrm{U_{DME}({\phi})}} & \qw \\
         & \ghost{\mathrm{U_{DME}({\phi})}} & \qw \\
\\ }}}}
    =
    \vcenter{\hbox{\Qcircuit @C=1.0em @R=0.2em @!R { \\
    & \ctrl{1} & \dstick{\hspace{2.0em}\mathrm{R_{ZZ}}\,(\mathrm{\phi})} \qw & \qw & \qw \barrier[0em]{1} & \qw & \gate{\mathrm{S^\dagger}} & \gate{\mathrm{H}} & \ctrl{1} & \dstick{\hspace{2.0em}\mathrm{R_{ZZ}}\,(\mathrm{\phi})} \qw & \qw & \qw & \gate{\mathrm{H}} & \gate{\mathrm{S}} \barrier[0em]{1} & \qw & \gate{\mathrm{H}} & \ctrl{1} & \dstick{\hspace{2.0em}\mathrm{R_{ZZ}}\,(\mathrm{\phi})} \qw & \qw & \qw & \gate{\mathrm{H}} \barrier[0em]{1} & \qw & \qw \\
    & \control \qw & \qw & \qw & \qw & \qw & \gate{\mathrm{S^\dagger}} & \gate{\mathrm{H}} & \control \qw & \qw & \qw & \qw & \gate{\mathrm{H}} & \gate{\mathrm{S}} & \qw & \gate{\mathrm{H}} & \control \qw & \qw & \qw & \qw & \gate{\mathrm{H}} & \qw & \qw \\ \\ \\
} }}$
\end{minipage}
\end{center}
\vspace{-1.5em}

This approach is feasible due to the transformations $HZH = X$ and $(S^\dagger H) Z (H S) = Y$, where $H$ denotes the Hadamard gate (distinct from the Heisenberg interaction $\hat{H}$). By applying Hadamard gates on both qubits, a $ZZ$ ($Z \otimes Z$) interaction can be transformed into an $XX$ ($X \otimes X$) interaction:
\begin{align}
    (H \otimes H) e^{-itZZ} (H \otimes H) &= e^{-itXX}.
\end{align}
Similarly, applying the $S^\dagger$ gate followed by a Hadamard gate enables the realization of a $YY$ ($Y \otimes Y$) interaction:
\begin{align}
    (S^\dagger H \otimes S^\dagger H) e^{-itZZ}  [(H S) \otimes (H S)] &= e^{-itYY}.
\end{align}

This is equivalent to

\begin{figure}[h]
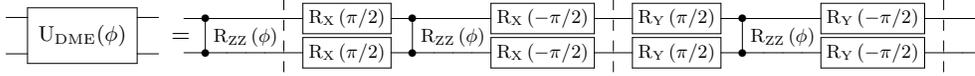

    \centering
    \scalebox{1}{  
    \begin{minipage}{\textwidth}
        \centering
        $\vcenter{\hbox{\hspace*{-5em}\input{circuits/dme}}}
        =
        \vcenter{\hbox{\hspace*{-3em} \input{circuits/dme_compile_experiment}}}$
    \end{minipage}
    }
    \caption{The Heisenberg interaction is employed to simulate a $\delta$-SWAP gate, which is a partial SWAP operation that evolves for a phase $\phi = \omega t$, where $\omega$ is the rate of ZZ interaction, and $t$ is the evolution duration. This process also represents a single Trotterization step ($M = 1$) of density matrix exponentiation (DME), which is the exponentiation $e^{-i\phi\rho}$ of density matrix $\rho$ with phase $\phi$.}
    \label{fig:heisenberg_circuit}
\end{figure}

The Pauli transfer Matrix for the Heisenberg compilation of DME (see Fig.~\ref{fig:heisenberg_circuit} for the detailed circuit) is given in Fig.~\ref{DME_Data}.


\subsection{Detailed circuit compilations}

\subsubsection{Full DBAC demonstration circuits}
Table~\ref{tab:k_m_values_M1} provides detailed circuit configurations corresponding to the qubit layout in Figure~\ref{fig:layout}. These circuits are optimized to balance computational feasibility with hardware limitations while maintaining accurate Trotterized evolution. Circuit B implements an $M = 2$ circuit to demonstrate a novel realization of DME. In contrast to the approach in Ref.~\cite{kjaergaard}, which applies DME with instruction recycling on a two-qubit system, our implementation extends to three qubits without recycling instructions. Circuit C implements a $k = 2$ circuit to highlight the recursive nature of the DBAC algorithm.


\begin{table}[H]
\centering
\begin{tabularx}{\textwidth}{c|c|>{\centering\arraybackslash}X}
\hline
$k$ & $M$ & Circuit \\
\hline

1 & 1 &
\begin{minipage}[c]{\linewidth}
\centering
\scalebox{1.0}{
    \Qcircuit @C=1.0em @R=0.2em @!R { \\
    	 	\lstick{q_1 :  } & \gate{\mathrm{R_X}\,(\mathrm{\theta})}\barrier[0em]{1} & \qw & \qw & \multigate{1}{\mathrm{U_{DME}({\pi/4})}} & \qw \\
    	 	\lstick{q_2 :  } & \gate{\mathrm{R_X}\,(\mathrm{\theta})} & \qw & \gate{\mathrm{R_Z}\,(\mathrm{\pi/4})} & \ghost{\mathrm{U_{DME}({\pi/4})}} & \qw \\
    \\ }}

\vspace{0.5em}

\textbf{Circuit A}: Reset qubit $q_7$, initialized using $R_X(\theta)$, with DME angle $\phi=\pi/4$, uses $n=2$ qubits.
\end{minipage} \\
\hline

1 & 2 &
\begin{minipage}[c]{\linewidth}
\centering
\scalebox{1.0}{
    \Qcircuit @C=1.0em @R=0.2em @!R { \\
    	 	\lstick{q_1 :  } & \gate{\mathrm{R_X}\,(\mathrm{\theta})}\barrier[0em]{2} & \qw & \gate{\mathrm{R_Z}\,(\mathrm{\pi/4})} & \multigate{1}{\mathrm{U_{DME}({\pi/8})}} & \qw & \qw \\
    	 	\lstick{q_2 :  } & \gate{\mathrm{R_X}\,(\mathrm{\theta})} & \qw & \qw & \ghost{\mathrm{U_{DME}({\pi/8})}} & \multigate{1}{\mathrm{U_{DME}({\pi/8})}} & \qw \\
    	 	\lstick{q_3 :  }  & \gate{\mathrm{R_X}\,(\mathrm{\theta})} & \qw & \gate{\mathrm{R_Z}\,(\mathrm{\pi/4})} & \qw & \ghost{\mathrm{U_{DME}({\pi/8})}} & \qw \\
    \\ }}

\vspace{0.5em}

\textbf{Circuit B}: $\phi=\pi/4$, uses $n=3$ qubits. This shows 2 steps of DME Trotterization.
\end{minipage} \\
\hline

2 & 1 &
\begin{minipage}[c]{\linewidth}
\centering
\scalebox{1.0}{
    \Qcircuit @C=1.0em @R=0.2em @!R { \\
    	 	\lstick{q_1 :  } & \gate{\mathrm{R_X}\,(\mathrm{\theta})}\barrier[0em]{3} & \qw & \gate{\mathrm{R_Z}\,(\mathrm{\pi/4})} & \multigate{1}{\mathrm{U_{DME}({\pi/4})}} \barrier[0em]{3} & \qw & \qw & \qw & \qw \\
    	 	\lstick{q_2 :  } & \gate{\mathrm{R_X}\,(\mathrm{\theta})} & \qw & \qw & \ghost{\mathrm{U_{DME}({\pi/4})}} & \qw & \qw & \multigate{1}{\mathrm{U_{DME}({\pi/4})}} & \qw \\
    	 	\lstick{q_3 :  } & \gate{\mathrm{R_X}\,(\mathrm{\theta})} & \qw & \qw & \multigate{1}{\mathrm{U_{DME}({\pi/4})}} & \qw & \gate{\mathrm{R_Z}\,(\mathrm{\pi/4})} & \ghost{\mathrm{U_{DME}({\pi/4})}} & \qw \\
    	 	\lstick{q_4 :  }  & \gate{\mathrm{R_X}\,(\mathrm{\theta})} & \qw & \gate{\mathrm{R_Z}\,(\mathrm{\pi/4})} & \ghost{\mathrm{U_{DME}({\pi/4})}} & \qw & \qw & \qw & \qw \\
    \\ }}

\vspace{0.5em}

\textbf{Circuit C}: $\phi=\pi/4$, uses $n=4$ qubits. This shows 2 steps of the recursive algorithm, separated by the barrier (dotted line). The first step $k=1$ is identical to Circuit A.
\end{minipage} \\
\hline
\end{tabularx}
\caption{Circuit layouts for DBAC with varying recursion steps $k$ and DME Trotter steps $M$. Each circuit corresponds to a different $(k, M)$ pair. All parameters given are angle (in radians). Qubits are arranged to display a nearest-neighbor structure for ease of illustration. See Fig.~\ref{fig:layout} for their actual placement on the square lattice architecture, where two-qubit gates are also nearest-neighbor.}
\label{tab:k_m_values_M1}
\end{table}


\subsubsection{Single- and two-qubit gates and operations}
In this sub-subsection, the detailed definitions of the gate used in the experiment are given.

\begin{center}
\begin{tabular}{c|c}
Single-Qubit Gates & Two-Qubit Operations \\ \\
\hline
\(
\begin{aligned}
I &= \begin{bmatrix} 1 & 0 \\ 0 & 1 \end{bmatrix}, \\ \\
X &= \begin{bmatrix} 0 & 1 \\ 1 & 0 \end{bmatrix}, \\ \\
Y &= \begin{bmatrix} 0 & -i \\ i & 0 \end{bmatrix}, \\ \\
Z &= \begin{bmatrix} 1 & 0 \\ 0 & -1 \end{bmatrix}, \\ \\
H &= \frac{1}{\sqrt{2}} \begin{bmatrix} 1 & 1 \\ 1 & -1 \end{bmatrix}, \\ \\
S &= \begin{bmatrix} 1 & 0 \\ 0 & i \end{bmatrix}, \\ \\
S^\dagger &= \begin{bmatrix} 1 & 0 \\ 0 & -i \end{bmatrix}, \\ \\
R_Z(\phi) &= e^{-i \phi Z}
\end{aligned}
\)
&
\(
\begin{aligned}
R_{ZZ}(\phi) &= \operatorname{diag}\left(
e^{-i\phi/2},\ 
e^{i\phi/2},\ 
e^{i\phi/2},\ 
e^{-i\phi/2}
\right), \\ \\
R_{YY}(\phi) &= (R_X(-\pi/2) \otimes R_X(-\pi/2))\, R_{ZZ}(\phi)\, (R_X(\pi/2) \otimes R_X(\pi/2)), \\ \\
R_{XX}(\phi) &= (R_Y(-\pi/2) \otimes R_Y(-\pi/2))\, R_{ZZ}(\phi)\, (R_Y(\pi/2) \otimes R_Y(\pi/2)), \\ \\
U_\text{DME}(\phi) &= R_{ZZ}(\phi) \cdot R_{YY}(\phi) \cdot R_{XX}(\phi).
\end{aligned}
\)
\end{tabular}

\end{center}

\subsubsection{Other Circuits}

\begin{tabular}{c|c}
Name & Circuit \\ \\
\hline
ZZ Circuit &
\scalebox{1.0}{ 
\Qcircuit @C=1.0em @R=0.5em @!R {
  \nghost{} & \gate{\mathrm{R_X}(\pi)} & \qw & \ctrl{1} & \dstick{\hspace{2.0em}\mathrm{ZZ}(\pi/8)} \qw & \qw & \qw & \qw &
  \gate{\mathrm{R_X}(\pi)} & \ctrl{1} & \dstick{\hspace{2.0em}\mathrm{ZZ}(\pi/8)} \qw & \qw & \qw & \qw &
  \gate{\mathrm{R_X}(\pi)} & \qw  \gategroup{1}{4}{2}{15}{1.8em}{--} \\
  \nghost{} & \gate{\mathrm{R_X}(\pi/2)} & \qw & \control \qw & \qw & \qw & \qw & \qw &
  \gate{\mathrm{R_X}(\pi)} & \control \qw & \qw & \qw & \qw & \qw &
  \gate{\mathrm{R_X}(\pi)} & \qw  \\ \\
}} \\ 
\hline
CZ &
\scalebox{1.0}{
\Qcircuit @C=1.0em @R=0.5em @!R {
  \nghost{q_0 :} & \ctrl{1} & \dstick{\hspace{2.0em}\mathrm{ZZ}(\pi/2)} \qw & \qw & \qw & \gate{\mathrm{R_Z}(\pi)} & \gate{\mathrm{S}^\dagger=R_Z(\pi/2)} & \qw  \\
  \nghost{q_1 :} & \control \qw & \qw & \qw & \qw & \gate{\mathrm{R_Z}(\pi)} & \gate{\mathrm{S}^\dagger=R_Z(\pi/2)} & \qw  \\ \\
}} \\ 
\hline
SWAP using 3 CNOTs &
\scalebox{1.0}{
\Qcircuit @C=1.0em @R=0.5em @!R {
  \nghost{q_0 :} & \ctrl{1} & \targ & \ctrl{1} & \qw  \\
  \nghost{q_1 :} & \targ & \ctrl{-1} & \targ & \qw  \\ \\
}} \\ 
\hline
CNOT from CZ &
\scalebox{1.0}{
\Qcircuit @C=1.0em @R=0.5em @!R {
  \nghost{q_0 :} & \qw & \control \qw & \qw & \qw & \qw \\
  \nghost{q_1 :} & \gate{H} & \ctrl{-1} & \gate{H} & \qw & \qw \\ \\
}} \\ 
\hline
SWAP compilation &
\scalebox{0.8}{
\Qcircuit @C=1.0em @R=0.5em @!R {
  \nghost{q_0 :} & \qw & \ctrl{1} & \dstick{\hspace{2.0em}\mathrm{ZZ}(\pi/2)} \qw & \qw & \qw & \gate{\mathrm{R_Z}(\pi)} & \gate{S} & \gate{H} & \ctrl{1} & \dstick{\hspace{2.0em}\mathrm{ZZ}(\pi/2)} \qw & \qw & \qw & \gate{\mathrm{R_Z}(\pi)} & \gate{S} & \gate{H} & \ctrl{1} & \dstick{\hspace{2.0em}\mathrm{ZZ}(\pi/2)} \qw & \qw & \qw & \gate{\mathrm{R_Z}(\pi)} & \gate{S} & \qw & \qw & \qw \\
  \nghost{q_1 :} & \gate{H} & \control \qw & \qw & \qw & \qw & \gate{\mathrm{R_Z}(\pi)} & \gate{S} & \gate{H} & \control \qw & \qw & \qw & \qw & \gate{\mathrm{R_Z}(\pi)} & \gate{S} & \gate{H} & \control \qw & \qw & \qw & \qw & \gate{\mathrm{R_Z}(\pi)} & \gate{S} & \gate{H} & \qw & \qw \\ \\
}} 
\end{tabular}

\section{Theory}
\label{app:energy}
\subsection{Cooling limits of DBAC}
\subsubsection{ITE solution for a single qubit}
Consider a qubit in initial state $\ket {\psi_0} = c_0 \ket 0 + c_1\ket 1$. For the case of a single qubit, we can choose 
\begin{align}
    H = -Z = \ket 1 \bra 1\ - \ket 0 \bra 0\ ,
\end{align}
as the Hamiltonian for DBAC, the ground state $\ket 0$ is an eigenstate of $Z$ with the lowest eigenvalue $E_g = -1$. To avoid confusion, we reserve $E_g$ for the ground-state energy and use $E_0$ to denote the initial energy of the state $\ket{\psi_0}$.  
\begin{align}
     \ket {\psi(\tau)} &= \frac{c_0 e^{\tau} \ket 0 + c_1 e^{-\tau} \ket 1}{(|c_0|^2 e^{2t} + |c_1|^2 e^{-2t})^{1/2}}\ ,
\end{align}
by using $    e^{-H\tau} = e^{Z\tau} = e^{\tau} \ket 0 \bra 0 + e^{-\tau} \ket 1 \bra 1$.
We notice that the fidelity of the state $\ket{\psi_0}$ with the ground state $\ket 0$ is
\begin{align}
    F_0 = |\langle{0 | \psi_0}\rangle|^2 = |c_0|^2\ ,
\end{align}
while the initial energy is
\begin{align}
    E_0 = |c_1|^2 - |c_0|^2 = 1=2F_0\ .
\end{align}
When we reset $\ket{\psi(\tau)} \to \ket 0$, then $E(\tau) \to 0$, which implies $F(\tau)\to1$.
We are interested in the energy of the state $\ket{\psi(\tau)}$, which is given by
\begin{align}
E(\tau)
= \langle\psi(\tau)|H|\psi(\tau)\rangle
= -\,\frac{|c_0|^{2} e^{2\tau} - |c_1|^{2} e^{-2\tau}}{|c_0|^{2} e^{2\tau} + |c_1|^{2} e^{-2\tau}}
= -\,\frac{1 - 2\epsilon(\tau)}{1 + \epsilon(\tau)}
= -1 + \frac{2\,\epsilon(\tau)}{1+\epsilon(\tau)} \xrightarrow[\tau\to\infty]{} -1,
\end{align}
where we define the excess energy as
\begin{align}
    \epsilon(t) = \frac{|c_1|^2}{|c_0^2|} e^{-4t}
    = \frac{1 - |c_0|^2}{|c_0^2|} e^{-4t} = \left(\frac{1}{F_0} - 1\right) e^{-4t}\ .
\end{align}
\subsubsection{Energy gain after one DBAC step of one qubit}
\begin{lem}[Energy Change under One Step of DBAC for single-qubit]
Let $\ket{\psi_0}$ be a pure single qubit state with energy $E_0$.
Then 1 step of DBAC with duration $s$ yields a state with energy:
\begin{align}
    \frac{E_1}{2} = \frac{E_0}{2}-\sin^2(s)(1-E_0^2)\Big((1-\cos(s)) E_0 + \cos(s)\Big)\ ,\label{eq:E1-energy}
\end{align}
\end{lem}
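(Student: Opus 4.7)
The plan is to compute $E_1 = \bra{\psi_\text{DBAC}} \h \ket{\psi_\text{DBAC}}$ for the state $\ket{\psi_\text{DBAC}} = e^{is\h} e^{is\ket\psi\bra\psi} e^{-is\h}\ket\psi$ by translating each factor into a Bloch-sphere rotation. Let $\vec n$ denote the Bloch vector of $\ket\psi$. Since $\h = -Z$, the initial energy satisfies $E_0 = -n_z$, so that $n_z^2 = E_0^2$ and $1 - n_z^2 = 1 - E_0^2$. The two outer echoes $e^{\mp is\h} = e^{\pm isZ}$ act as $R_z(\mp 2s)$ on the Bloch sphere, and using $\ket\psi\bra\psi = (I + \vec n \cdot \vec\sigma)/2$, the middle factor equals $e^{is/2}\exp(is\,\vec n \cdot \vec\sigma/2)$, which up to a global phase is a rotation by angle $-s$ about the axis $\vec n$.

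The first concrete step is then to express the Bloch vector of $\ket{\psi_\text{DBAC}}$ as
\[ \vec r = R_z(2s)\, R_{\vec n}(-s)\, R_z(-2s)\, \vec n. \]
Using the conjugation identity $R_z(\alpha)\, R_{\vec m}(\beta)\, R_z(-\alpha) = R_{R_z(\alpha)\vec m}(\beta)$, this reduces to $\vec r = R_{\vec n'}(-s)\vec n$ with $\vec n' := R_z(2s)\vec n$. I would then apply Rodrigues' rotation formula and exploit three simplifications that come for free: $n'_z = n_z$, $\vec n' \cdot \vec n = n_z^2 + (1-n_z^2)\cos 2s$, and $(\vec n' \times \vec n)_z = -(1-n_z^2)\sin 2s$. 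Each of these follows directly from the explicit coordinates of $R_z(2s)\vec n$ together with $n_x^2 + n_y^2 = 1 - n_z^2$.

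The final step is to extract $r_z$, substitute $E_1 = -r_z$ and $n_z = -E_0$, and apply the double-angle identities $\sin 2s = 2\sin s\cos s$ and $\cos 2s = 1 - 2\sin^2 s$. Collecting terms of common order in $\sin^2 s$ reorganizes the expression into $E_1 = E_0 - 2\sin^2 s\,(1-E_0^2)\big[(1-\cos s)E_0 + \cos s\big]$, and dividing by two produces the form stated in the lemma.

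The only real obstacle is the algebraic bookkeeping; no conceptual difficulty arises once the Bloch picture is in place. Its virtue is that every nontrivial quantity depends only on $n_z$, so the phases of the amplitudes $c_0, c_1$ never have to be carried along. An equivalent Heisenberg-picture derivation, based on $e^{is\ket\psi\bra\psi} = I + (e^{is} - 1)\ket\psi\bra\psi$ and on evaluating $\bra{\psi_\text{DBAC}} \h \ket{\psi_\text{DBAC}}$ using the identities $|\langle\psi|e^{-is\h}|\psi\rangle|^2 = 1 - (1-E_0^2)\sin^2 s$ and $\bra\chi \h \ket\psi = E_0\cos s + i\sin s$ with $\ket\chi = e^{-is\h}\ket\psi$, produces the same formula and can serve as a useful cross-check.
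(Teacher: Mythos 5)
Your proposal is correct, and it reaches Eq.~\eqref{eq:E1-energy} by a genuinely different route than the paper. The paper works at the operator level: it uses the pure-state identity $e^{is\psi}=\id+(e^{is}-1)\psi$ to write $V_0=\id+v(s)\,\psi_0(s)$, expands $\psi_1=V_0\psi_0V_0^\dagger$ into roughly a dozen Pauli products ($\eta\psi_0\eta$, $\{\eta,\psi_0\}$, etc.), and reads off the $Z$ coefficient after invoking $|1+v|^2=1$ and $v+v^*+|v|^2=0$; it also needs a preliminary without-loss-of-generality rotation to set $\alpha_Y=0$ and a purity argument to fix $\alpha_X^2=\tfrac14(1-E_0^2)$. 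You instead pass to $SO(3)$: the echoes become $R_z(\pm 2s)$, the middle factor becomes $R_{\vec n}(-s)$ up to a global phase, the conjugation identity collapses the product to a single rotation $R_{\vec n'}(-s)$ about the tilted axis $\vec n'=R_z(2s)\vec n$, and Rodrigues' formula delivers $r_z$ directly. I checked the three ingredients you list ($n_z'=n_z$, $\vec n'\cdot\vec n=n_z^2+(1-n_z^2)\cos 2s$, $(\vec n'\times\vec n)_z=-(1-n_z^2)\sin 2s$) and the final collection of terms; with $E_1=-r_z$, $n_z=-E_0$ the result matches Eq.~\eqref{eq:E1-energy} exactly, and no WLOG rotation is needed since only $n_z$ and $n_x^2+n_y^2$ ever appear. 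What each approach buys: yours is shorter and makes the geometric content (a single rotation about a tilted axis) transparent, while the paper's operator expansion tracks the full density matrix including the $\id$ coefficient (a built-in trace-preservation check) and uses the same reflection identity that underlies the multi-qubit discussion, so it generalizes more directly beyond one qubit. Your closing Heisenberg-picture cross-check is essentially the paper's method in miniature.
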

\begin{proof}
We begin by setting notation.
Since DBAC is only concerned with the state energy (the $Z$ axis), without loss of generality, we can redefine $X$ and $Y$ axis to get
\begin{align}
\psi_0 = \frac{1}{2} \id + \alpha_X X + \alpha_Z Z\ ,\label{eq:psi-0-redefined-x-y}
\end{align}
where $\alpha_X,\alpha_Z \in \mathbb{R}$, to ensure that $\ket{\psi_0}$ is Hermitian.
Moreover,
\begin{align}
    E_0 = \text{tr}[H\psi_0]=-2\alpha_Z
\end{align}
and thus
without loss of generality
\begin{align}
\psi_0 = \frac{1}{2} \id + \alpha_X X -\frac{E_0}{2} Z\ ,\label{eq:psi-0-redefined-x-y}
\end{align}

DBAC is based on a recursion relation. We want to compute the state after $k=1$ step:
\begin{align}
    \psi_1 &= V_0 \psi_0 V_0^\dagger\ . \label{eq:define-recursion}
\end{align}

The evolution is governed by the unitary:
\begin{align}
    V_0 &= e^{isH} e^{is\psi_0} e^{-isH}
    = \id + (e^{i s} -1) e^{isH} \psi_0 e^{-isH}
    = \id + (e^{i s} -1) e^{-isZ} \psi_0 e^{isZ}
    =: \id + v(s)  \psi_0(s)\ ,  \label{eq:define-vs}
\end{align}
where $s$ is the step size, and the shorthands $v(s)= e^{is}-1$, $\psi_0(s) = e^{-isZ}\psi_0 e^{isZ}$ are used. Note that in the second equality we used the identity for pure states $e^{ is\psi} = \id + (e^{ is} -1)\psi$. Next, we explicitly evaluate
\begin{align}
    \psi_0(s)=e^{-isZ}\psi_0 e^{isZ}
    &=\frac{1}{2} \id + \alpha_X \cos(2s) X - \alpha_X \sin(2s) Y - \frac {E_0}2Z\\
    &=\psi_0 - \alpha_X (1-\cos(2s)) X + \alpha_X \sin(2s) Y \\
    &=\psi_0 + x_0 X +y_0 Y \\
    &=\psi_0 + \eta\ , \label{eq:define-eta}
\end{align}
where we define 
\begin{align}\label{eq:x0y0def}
x_0 = -\alpha_X (1-\cos(2s)), \qquad y_0 = \alpha_X \sin(2s),    
\end{align}
and also $\eta = x_0 X +y_0 Y$. Note that in the above calculation, we have made use of the identities
\begin{align}
    e^{-isZ} X e^{isZ} &= \cos(2s) X + \sin(2s) Y\ , \qquad
    e^{-isZ} Y e^{isZ} = \cos(2s) Y - \sin(2s) X\ .
\end{align}

Substitute Eq.~\eqref{eq:define-vs} and~\eqref{eq:define-eta} into Eq.~\eqref{eq:define-recursion}
\begin{align}
    \psi_1 &= V_0 \psi_0 V_0^\dagger\\
    &=(\id + v  \psi_0(s))\psi_0(\id + v^*  \psi_0(s))\\
    &= \psi_0 + v (\psi_0 + \eta) \psi_0  + v^*  \psi_0 (\psi_0 + \eta) +|v|^2 (\psi_0 + \eta) \psi_0  (\psi_0 + \eta)\\
    &= \psi_0 + (v + v^*) \psi_0^2 + v \eta \psi_0  + v^*\psi_0 \eta + |v|^2 (\psi_0^3 + \eta \psi_0^2 +\psi_0^2 \eta + \eta \psi_0 \eta)\ .
\end{align}
Again, we make use of the fact that for pure states $\ket{\psi_0}$, $\psi_0^k = \psi_0$ for all integers $k$. 
Next, we rearrange terms to get
\begin{align}
    \psi_1 =& (1 + v + v^* + |v|^2) \psi_0 
    + |v|^2 \{\eta ,\psi_0 \}
    + v \eta \psi_0   
    + v^*\psi_0 \eta 
    + |v|^2 (\eta \psi_0 \eta)\ .
    \label{eq:symbol_psi_1}
\end{align}

Now we calculate the terms involved in or will be relevant to calculate Eq. \eqref{eq:symbol_psi_1}:


\begin{enumerate}
    \item Calculate $\eta \psi_0 \eta$:
        \begin{align}
        (x_0 X +y_0 Y)X  (x_0 X +y_0 Y) 
        &=(x_0^2 -y_0^2)X + 2x_0y_0 Y\\
        (x_0 X +y_0 Y)Y  (x_0 X +y_0 Y)
        &=(y_0^2 -x_0^2)Y + 2x_0y_0 X\\
        (x_0 X +y_0 Y)Z  (x_0 X +y_0 Y)
        &=-(x_0^2 +y_0^2)Z\ .
        \end{align}   
        \begin{align}
         \eta\psi_0 \eta&=(x_0 X +y_0 Y)\psi_0  (x_0 X +y_0 Y) \\  
        &=(x_0 X +y_0 Y)(\frac{1}{2} \id + \alpha_X X - \frac{E_0}{2}Z)  (x_0 X +y_0 Y) \\  
        &=\frac{1}{2} (x_0^2 + y_0^2) \id + \alpha_X [(x_0^2-y_0^2)X + 2x_0y_0 Y] +\frac{E_0}{2}(x_0^2 +y_0^2)Z \\
        &=\frac{1}{2} (x_0^2 + y_0^2) \id + \alpha_X(x_0^2-y_0^2)X + 2\alpha_Xx_0y_0 Y +\frac{E_0}{2}(x_0^2 +y_0^2)Z 
        \end{align}
    \item Calculate $\eta \psi_0$:
        \begin{align}
        \eta \psi_0 &= (x_0 X +y_0 Y)(\tfrac{1}{2} \id + \alpha_X X - \tfrac{E_0}{2}Z) \\
        & = x_0 X (\tfrac{1}{2} \id + \alpha_X X - \tfrac{E_0}{2}Z) + y_0 Y (\tfrac{1}{2} \id + \alpha_X X - \tfrac{E_0}{2}Z)\\
        & = x_0 (\tfrac{1}{2} X + \alpha_X \id + i\tfrac{E_0}{2}Y) + y_0(\tfrac{1}{2} Y - i\alpha_X Z - i\tfrac{E_0}{2}X)\\
        & = \alpha_X  x_0 \id
        + \left(x_0 \tfrac{1}{2} - y_0 i\tfrac{E_0}{2}\right)X
        + \left(\tfrac{1}{2}y_0+ x_0 i\tfrac{E_0}{2}\right) Y
        - i\alpha_X y_0 Z\ .
        \end{align}
    \item Calculate $\psi_0\eta$ by using $\psi_0^\dagger = \psi_0$ and $\eta^\dagger = \eta$:
        \begin{align}
            \psi_0\eta &= (\eta^\dagger\psi_0^\dagger)^\dagger = (\eta\psi_0)^\dagger\\
            & = \alpha_X  x_0 \id
            + \left(x_0 \tfrac{1}{2} + y_0 i\tfrac{E_0}{2}\right)X
            + \left(\tfrac{1}{2}y_0 - x_0 i\tfrac{E_0}{2}\right) Y
            + i\alpha_X y_0 Z\ .
        \end{align} 
    \item Combining items 2 and 3 allows us to calculate $\{\eta, \psi_0\}$:
        \[
        \eta \psi_0 + \psi_0 \eta = 2\alpha_X  x_0 \id + x_0 X + y_0 Y \, .
        \]
    \item Calculate $\alpha_X$ by using purity $\psi_0^2=\psi_0$:
        \begin{align}
            \psi_0^2 &= \left( \tfrac{1}{2} \id + \alpha_X X - \tfrac{E_0}{2} Z \right) \left( \tfrac{1}{2} \id + \alpha_X X - \tfrac{E_0}{2} Z \right)\\
            &= \left( \tfrac{1}{4} + \alpha_X^2 + \tfrac{E_0^2}{4} \right) \id + \alpha_X X - \tfrac{E_0}{2} Z\ . \label{eq:alpha-x-calculation-intermediate}
        \end{align}
        Equating Eq.~\eqref{eq:alpha-x-calculation-intermediate} with Eq.~\eqref{eq:psi-0-redefined-x-y}, we get
        \begin{align}
            \alpha_X^2 &= \tfrac{1}{4} (1 - E_0^2)\ .
        \end{align}
    \item Calculate $x_0^2 + y_0^2$: 
        \[
        \begin{aligned}
        x_0^2 + y_0^2 &= \alpha_X^2 \!\left( 1 - 2\cos(2s) + \cos^2(2s) + \sin^2(2s) \right) \\
        &= 2\alpha_X^2 \!\left(1-\cos(2s)\right) \\
        &= -2\alpha_X x_0 \, .
        \end{aligned}
        \]
\end{enumerate}

Grouping all the terms:
\begin{align}
    \psi_1 =& (1 + v + v^* + |v|^2) \psi_0 
    + |v|^2 \{\eta \psi_0 + \psi_0 \eta\}+ v \eta \psi_0   
    + v^*\psi_0 \eta 
    + |v|^2 (\eta \psi_0 \eta)\\
    =& |1+v|^2 \psi_0 
    + |v|^2 \{\eta \psi_0 + \psi_0 \eta\}+ v \eta \psi_0   
    + v^*\psi_0 \eta 
    + |v|^2 (\eta \psi_0 \eta)\\
    =& |1+v|^2\left(\frac{1}{2} \id + \alpha_X X - \frac{E_0}{2}Z\right)\\
    &+|v|^2 \left(2\alpha_X  x_0 \id + x_0 X + y_0 Y\right)\\
    &+ v \left[\alpha_X  x_0 \id
    + \left(x_0 \frac{1}{2} - y_0i\frac{E_0}{2}\right)X
    + \left(\frac{1}{2}y_0+ x_0i\frac{E_0}{2}\right) Y
    -i\alpha_Xy_0 Z\right]\\
    &+v^*\left[\alpha_X  x_0 \id
    + \left(x_0 \frac{1}{2} + y_0i\frac{E_0}{2}\right)X
    + \left(\frac{1}{2}y_0 - x_0i\frac{E_0}{2}\right) Y
    +i\alpha_Xy_0 Z\right]\\
    &+|v|^2 \left[\frac{1}{2} (x_0^2 + y_0^2) \id + \alpha_X(x_0^2-y_0^2)X + 2\alpha_Xx_0y_0 Y +\frac{E_0}{2}(x_0^2 +y_0^2)Z \right]\ .
\end{align}

We note that $X$ and $Y$ terms are irrelevant to our task of resetting qubit, because we are only interested in the energy of the state, which depends on $\id$ and $Z$ terms. Hence we regroup to get
\begin{align}
    \psi_1=&\left[\frac{1}{2}|1+v|^2 + |v|^2 2\alpha_X  x_0 + (v+v^*)\alpha_X  x_0 + |v|^2 \frac{1}{2} (x_0^2 + y_0^2)\right]\id\\
    &+C_X X + C_Y Y+\left[-|1+v|^2\frac{E_0}{2} 
    + (v^* - v)i\alpha_Xy_0
    + |v|^2 \frac{E_0}{2}(x_0^2 +y_0^2)
    \right]Z\\
    =&\left[\frac{1}{2}|1+v|^2 + |v|^2 \alpha_X  x_0 + (v+v^*)\alpha_X  x_0\right]\id\\
    &+C_X X + C_Y Y
    +\left[-|1+v|^2\frac{E_0}{2} 
    + (v^* - v)i\alpha_Xy_0
    - |v|^2 \alpha_X x_0E_0
    \right]Z\ .
\end{align}

Now, we use $v(s) = e^{is} - 1$ to simplify $|1+v|^2 = 1$ and $v+v^* + |v|^2 = 0$. We then use the definition for $x_0$ and $y_0$ in Eq.~\eqref{eq:x0y0def} to arrive at
\begin{align}
    \psi_1=&\frac{1}{2}\id+C_X X + C_Y Y+\left[-\frac{E_0}{2} 
    + (-2i\sin(s))i\alpha_Xy_0
    - 2(1-\cos(s)) \alpha_X x_0E_0
    \right]Z\\
    =&\frac{1}{2}\id+C_X X + C_Y Y+\left[-\frac{E_0}{2} 
    -2\alpha_X^2\Big(-(1-\cos(s))(1-\cos(2s)) E_0 + \sin(s) \sin(2s)\Big)
    \right]Z\\
    =&\frac{1}{2}\id+C_X X + C_Y Y+\left[-\frac{E_0}{2} 
    -2\alpha_X^2\Big(-(1-\cos(s))2\sin^2(s) E_0 + 2\sin^2( s) \cos(s)\Big)
    \right]Z\\
    =&\frac{1}{2}\id+C_X X + C_Y Y+\left[-\frac{E_0}{2} 
    -4\sin^2(s)\alpha_X^2\Big((1-\cos(s)) E_0 + \cos(s)\Big)
    \right]Z\\
    =&\frac{1}{2}\id+C_X X + C_Y Y-\frac12\left[E_0-2\sin^2(s)(1-E_0^2)\Big((1-\cos(s)) E_0 + \cos(s)\Big)
    \right]Z  \\
    \equiv&\frac{1}{2}\id+C_X X + C_Y Y-\frac{E_1}2Z
    \ .\label{eq:energy_change}
\end{align}

The energy of the state $\ket{\psi_1}$ is thus given by Eq.~\eqref{eq:E1-energy}.
\end{proof}

\subsection{DBAC Energy bound for more qubit cases}

We show how DBAC can be used for resetting a qubit.
Let $U_k$ denote the circuit to prepare $\ket{\omega_k}$ from a trivial reference state $\ket 0$, i.e., $\ket{\omega_k} = U_k \ket{0}$. 
We can now use unitarity to simplify $e^{i\sqrt{s}\omega_k}= U_k e^{i\sqrt{s_k}\ket 0\langle0|}    U_k^\dagger$ and arrive at a recursive formula for DBAC circuit synthesis:
    \begin{align}
    U_{k+1} = e^{i\sqrt{s_k}\h}U_k e^{i\sqrt{s_k}\ket 0\langle0|}    U_k^\dagger
  e^{-i\sqrt{s_k}\h} U_k\ \ .
  \label{DBAC Uk}
\end{align}
The average energy $E_k:=\bra{\omega_k}\h\ket{\omega_k}$ of the states $\ket{\omega_k}:= U_k \ket 0$, where $U_k$ is defined recursively in Eq.~\eqref{DBAC Uk}, are bounded as~\cite{dbqite}
\begin{align}
     E_{k+1} \le  E_k - 2s_k V_k + \mathcal O (s_k^{2})
    \label{eq fluctuation-refrigeration main Os2} \ 
\end{align}
where $V_k:=\bra{\omega_k}\h^2\ket{\omega_k}-E_k^2$ is the variance of the energy in state  $\ket{\omega_k}$. This demonstrates that the energy of the qubit decreases (or at least remains constant) at each step.

\section{Numerics}
\subsection{Comparing DBAC performance with ideal DME}

We benchmark DBAC with ideal DME (noise-free $\delta$-SWAP). Figure~\ref{fig:ideal_DME} quantifies the gain in ground-state fidelity as a function of the number of DB-QITE steps $k$, and for various step sizes $s$ and instruction depths $M$. The curves converge within a few steps and identify the minimum initial fidelity needed to reach a target (e.g., $F=0.9$).

\begin{figure}[H]
    \centering
    \includegraphics[width=1.0\linewidth]{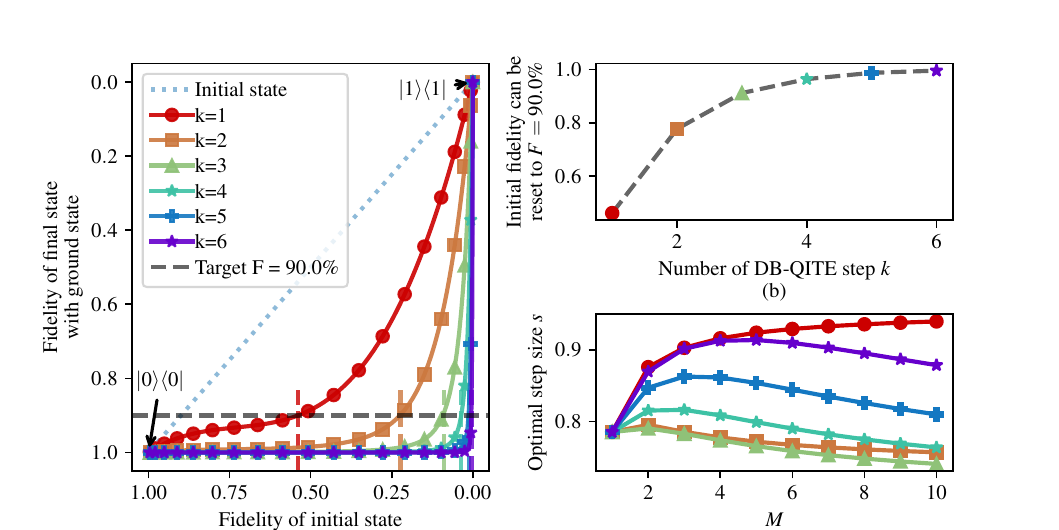}
    \caption{DBAC performance for different number of steps $k$, if DME is ideal. Fig. (a) shows that after $k=6$ steps all the state can be reset, except for the state $\ket 1\bra 1$. Fig. (b) shows the minimum initial fidelity  of initial state to the ground state (that is, how far away the initial state can be from the ground state) that can still be reset to a final fidelity $F=90\%$. Fig. (c) shows optimal step size $s$ for different number of steps $k$ and $M$. For different combination of $k$ and $M$, the optimal step size $s$ that lower the energy of the qubit the most is different.}
    \label{fig:ideal_DME}
\end{figure}

\subsection{Bloch sphere}
Figure~\ref{fig:bloch_combined} visualizes representative DBAC trajectories on the Bloch sphere for different initial overlaps with the ground state. The examples illustrate how increasing $k$ and $M$ enlarges the basin of successful reset to the target fidelity. Conversely, states with very small initial overlap remain outside the attainable region for the $(k,M)$ values shown.

\begin{figure}[H]
    \centering
    \begin{subfigure}{0.3\textwidth}
        \centering
         \includegraphics[width=0.65\linewidth]{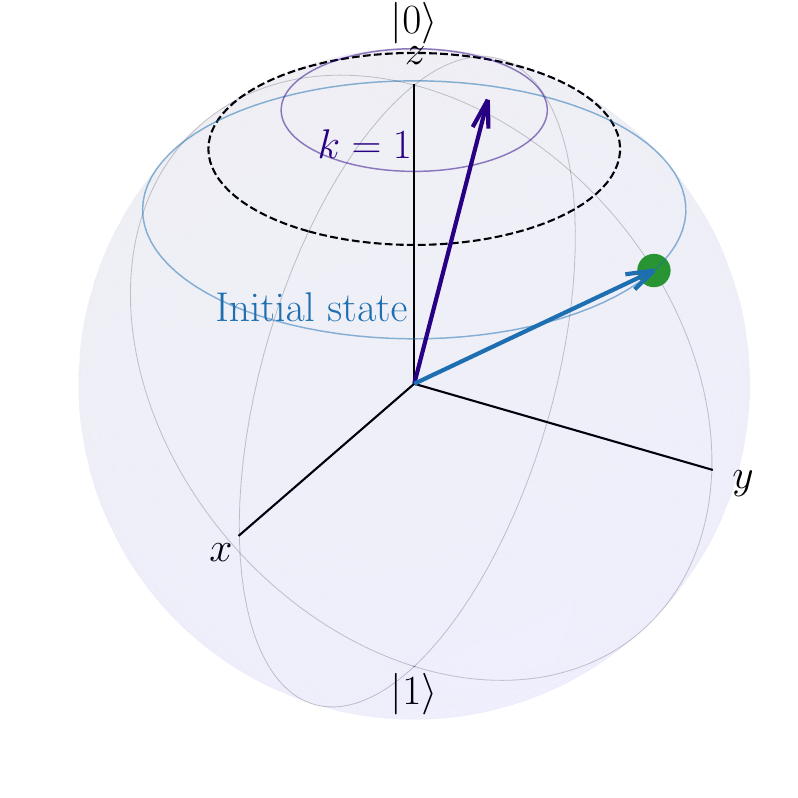}
        \caption{Initial state close to ground state $F_{\text{initial}}=80\%$}
        \label{fig:blochgood}
    \end{subfigure}
    \begin{subfigure}{0.3\textwidth}
        \centering
        \includegraphics[width=0.65\linewidth]{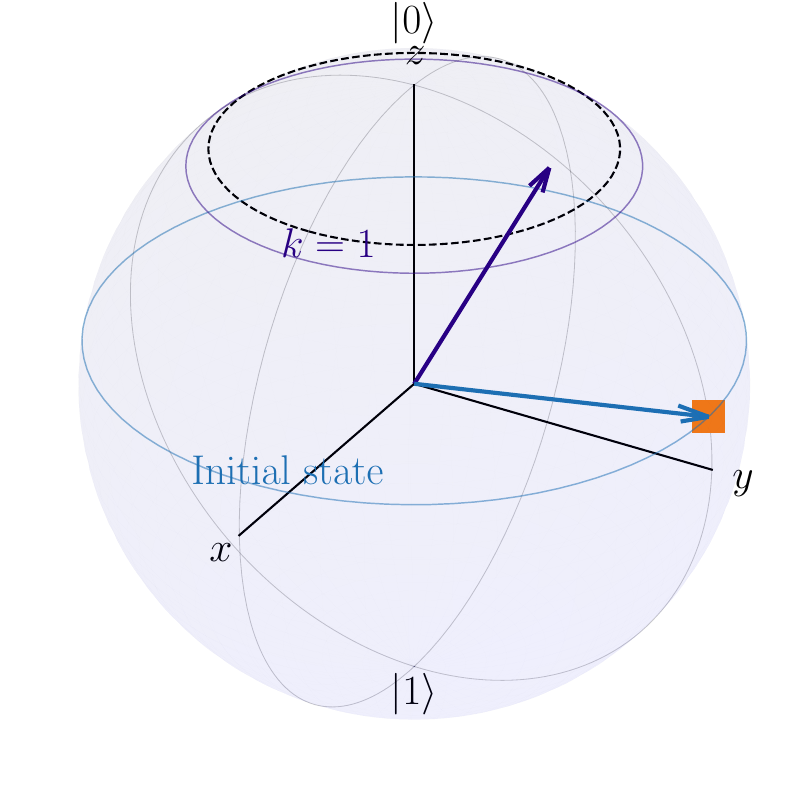}
        \caption{Initial state not close to ground state $F_{\text{initial}}=60\%$}
        \label{fig:blochmid}
    \end{subfigure}
    \begin{subfigure}{0.3\textwidth}
        \centering
        \includegraphics[width=0.65\linewidth]{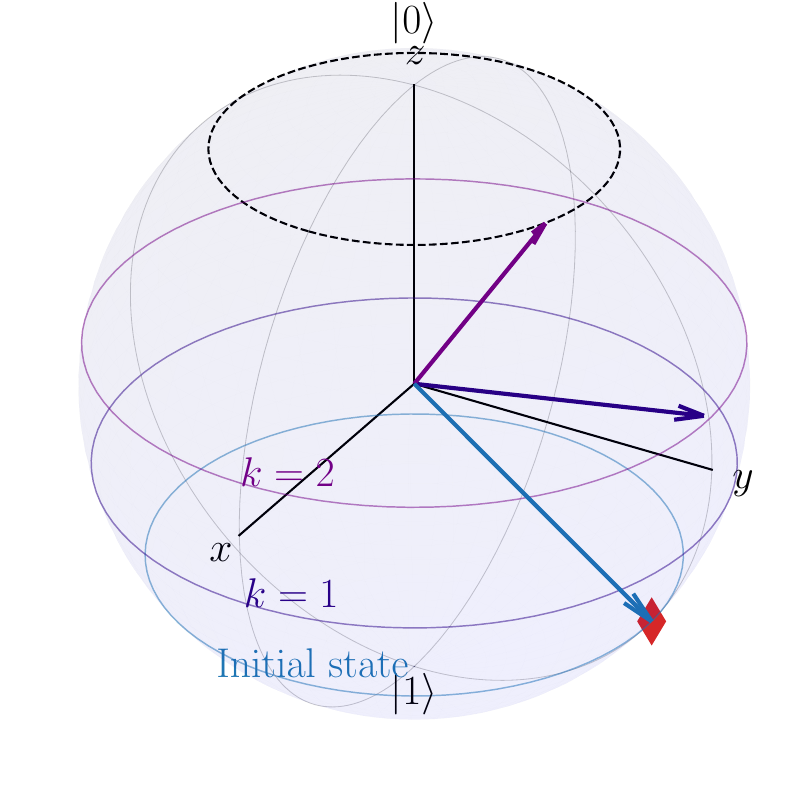}
        \caption{Initial state far from ground state $F_{\text{initial}}=10\%$}
        \label{fig:blochbad}
    \end{subfigure}
    \caption{Illustration of DBAC for different initializations, comparing cases where the initial state is close to or far from the ground state. The dotted black latitude line represents the target fidelity with the ground state, $F_{\text{target}}=0.9$. Each latitude line corresponds to the state indicated by the arrow of the same color, showing both the initial state and its evolution after $k=1$ or $k=2$ steps.
    (a) The initial state has a fidelity of $F_{\text{initial}}=80\%$ with the ground state. The target fidelity is achieved after $k=1$ step with $M=1$, requiring only one extra copy of the initial state.
    (b) The initial fidelity with the ground state is $F_{\text{initial}} = 60\%$. The target fidelity is reached after $k=2$ steps with $M=2$, which consumes eight extra copies of the initial state.
    (c) The initial state is far from the ground state, with $F_{\text{initial}} = 10\%$. In this case, the target fidelity cannot be reached within $k=2$ steps and $M=2$.}
    \label{fig:bloch_combined}
\end{figure}


\subsubsection{Step size s~ sweep}
Figure~\ref{fig:s_sweep} sweeps the step size $s$ versus the initialization angle $\theta$. In both $k=1$ cases ($M=1$ and $M=2$), there exists a state-dependent threshold $s^{*}$ beyond which the final fidelity meets the target over a broad range of $\theta$. This indicates robustness to moderate over-rotation in $s$ and complements the optimal values extracted in Figure~\ref{fig:ideal_DME}.

\begin{figure}[H]
    \centering
    \begin{subfigure}{0.49\textwidth}
        \centering
        \includegraphics[width=\linewidth]{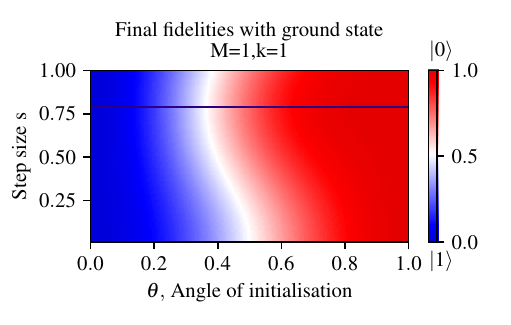}
        \caption{DBAC with $k=1, M=1$ step.}
        \label{fig:M1k1_s_sweep}
    \end{subfigure}
    \begin{subfigure}{0.49\textwidth}
        \centering
        \includegraphics[width=\linewidth]{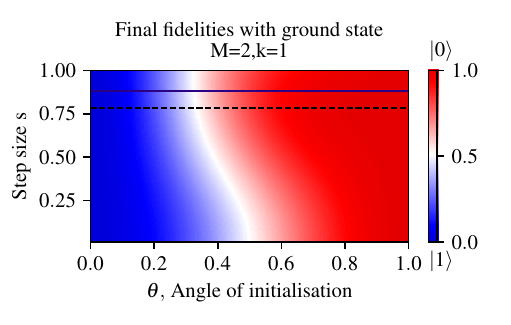}
        \caption{DBAC with $k=1, M=2$ steps.}
        \label{fig:M2k1_s_sweep}
    \end{subfigure}
    \caption{(a) and (b) show the step size $s$ sweep for the cases $k=1$ with $M=1$ and $M=2$. As long as the experimental step size s exceed the optimal s, DBAC will reset the qubit.}
    \label{fig:s_sweep}
\end{figure}

\section{Experiment set-up}
\subsection{Device layout and parameters}

The detailed device parameters for the set of qubits we consider in this work are given in Table. \ref{basic_device_parameters} and shown in in Fig. \ref{fig:layout} \cite{alghadeer2025low, alghadeer2025characterization}. Randomized benchmarking (RB) ~\cite{chow2009randomized, gambetta2012characterization} experiments were conducted using an \( XY \)-Clifford decomposition for obtaining simultaneous 4-qubits gate fidelities.

\begin{figure}[H]
    \centering
    \includegraphics[width=0.64\linewidth]{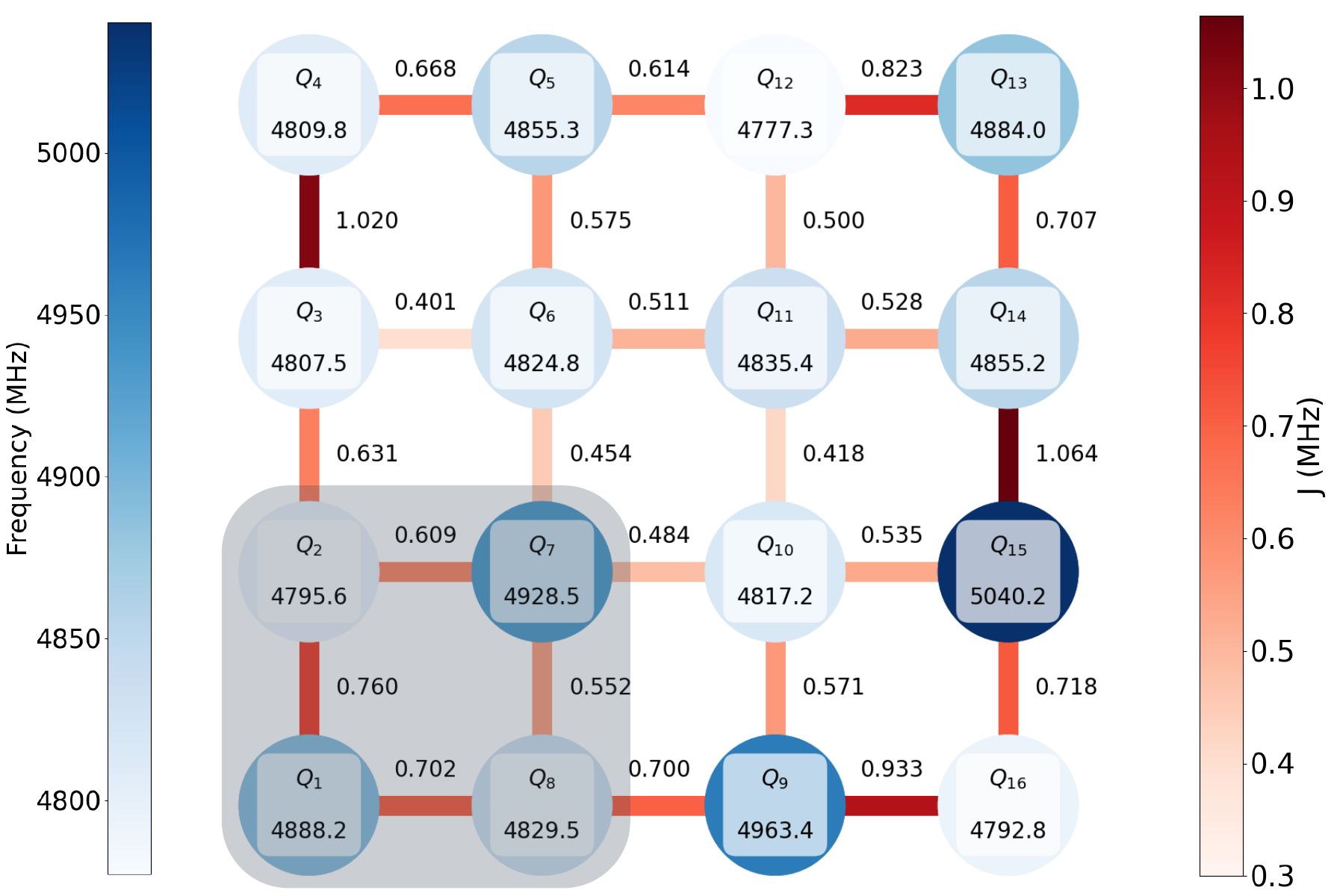}
    \caption{Layout of the 16-qubit chip. The four qubits used in the circuits from Table~\ref{tab:k_m_values_M1} are highlighted by the shadow square in the lower-left corner. The DME and DBAC protocols can be applied to other qubits or extended to larger subsets of the device. Here, $Q_{1}$, $Q_{2}$, $Q_{7}$ and $Q_{8}$ correspond to $q_{1}$, $q_{2}$, $q_{3}$ and $q_{4}$ in the circuit diagrams.}
    \label{fig:layout}
\end{figure}

\begin{table}[H]
    \centering
    \caption{Basic Device Parameters. $Q_{\text{int}}$ is internal Q factor of resonator, $\kappa_{\text{ext}}$ is external decay rate of resonator. Qubits relaxation and coherence times $T_{1}$, $T_{2R}$ and $T_{2E}$ are averaged over 400 repeated measurements. Simultaneous single-qubit gate fidelities are also given measured through randomized benchmarking.}
    \resizebox{0.8\textwidth}{!}{ 
        \begin{tabular}{|c|c|c|c|c|c|c|c|c|c|c|}
            \hline
            Parameters & \( w_{r}/2\pi \) & \( w_{q}/2\pi \) & \( Q_{i} \) & \( k_{ext} \) & \( \chi \) & \( \alpha \)  & \( \langle T_{1} \rangle \) & \( \langle T_{2R} \rangle \) & \( \langle T_{2E} \rangle \) & \( \mathcal{F} \) \\
            \hline
            Qubits & MHz & MHz & \( 10^{4} \) & MHz & KHz  & MHz & \( \mu s \) & \( \mu s \) & \( \mu s \) &  \%  \\
            \hline
            \( Q_{1} \)  & 9997.4 & 4888.2 & 11.7 & 2.6 & -200.0 & -196.6 & 126 ± 18 & 107 ± 12 & 124 ± 23 & 99.961  \\
            \( Q_{2} \)  & 9386.0 & 4795.6 & 11.8 & 1.3 & -225.0 & -197.2 & 89 ± 13  & 56 ± 15  & 86 ± 12  & 99.902  \\
            \( Q_{7} \)  & 9474.2 & 4928.5 & 10.8 & 1.7 & -175.0 & -195.6 & 77 ± 12  & 45 ± 12  & 82 ± 15  & 99.946  \\
            \( Q_{8} \)  & 9908.6 & 4829.5 & 5.8  & 2.0 & -175.0 & -197.2 & 63 ± 8   & 32 ± 7   & 71 ± 8   & 99.936  \\
            \hline
        \end{tabular}
    }
    \label{basic_device_parameters}
\end{table}

\subsection{Additional experimental results}
\onecolumngrid

\subsubsection{Two-qubit Native ZZ interaction: calibration procedure}

Among quantum computing architectures, superconducting transmon qubits~\cite{transmon} provide a compelling platform, as lithographically defined circuit elements naturally enable capacitive couplings between neighbouring qubits~\cite{alghadeer2025characterization,place2021new}. These couplings give rise to fast, high-fidelity native two-qubit operations~\cite{yan2018tunable}, which are essential for implementing multiple quantum protocols. Although residual inter-qubit coupling can introduce correlated errors and always-on state-dependent ZZ shifts~\cite{ketterer2023characterizing,tripathi2022suppression,murali2020software,krinner2020benchmarking,zhao2022quantum,fors2024comprehensive}, careful engineering both suppresses detrimental interactions and repurposes the remaining couplings as a computational resource~\cite{alghadeer2025low}. In this work, we implement DBAC on a superconducting quantum lattice with well-controlled nearest-neighbour ZZ interactions, enabling the precise, programmable two-qubit operations required to redistribute entropy locally and efficiently.

We implement entangling operations between fixed-frequency transmon qubits in the lattice by using the Stark-induced ZZ by level excursions (siZZle) technique \cite{sizzlePRL, PhysRevLettMitchell} to boost static ZZ coupling. We use two additional off-resonant drives to induce parametrized shifts in the energy levels of a two-transmons system as shown in Fig. ~\ref{siZZle_Calibration}, where we perform a two-dimensional parameter sweep over the Stark drive frequency and amplitude space in order to building a coarse map of the ZZ interaction rates. This approach modifies the native ZZ interaction and can be used to tune up a controlled-Z (CZ) gate by driving each transmon with a detuned microwave tone. The two simultaneous off-resonant drives on the two qubits shift the energy levels of the system and, through the capacitive coupling, modify the effective ZZ rate between the qubits. The modified ZZ rate $\tilde{\nu}_{ZZ}$ can be approximated as \cite{sizzlePRL}:

\begin{equation}
\tilde{\nu}_{ZZ} = \nu_{ZZ,s} + \frac{2J \alpha_0 \alpha_1 \Omega_0 \Omega_1 \cos(\phi_0 - \phi_1)}{\Delta_{0,d} \Delta_{1,d} (\Delta_{0,d} + \alpha_0)(\Delta_{1,d} + \alpha_1)},
\label{eq:ZZ_sizzle}
\end{equation}

\begin{equation}
\nu_{ZZ,s}  \approx -\frac{2J_{ij}^2(\alpha_i + \alpha_j)}{(\Delta_{ij} + \alpha_i)(\alpha_j - \Delta_{ij})},
\label{eq:ZZ}
\end{equation}
where $\nu_{ZZ,s}$ denotes the static ZZ interaction rate, $\alpha_0$ and $\alpha_1$ are the anharmonicities, $J$ is the fixed capacitive coupling strength, and $\Omega_0$, $\Omega_1$, $\Delta_{0,d}$, $\Delta_{1,d}$, $\phi_0$ and $\phi_1$ denote the drive amplitudes, detunings to each qubit, and relative drive phases, respectively. In principle, the effective $\tilde{\nu}_{ZZ}$ rate can be boosted or canceled depending on the sign of the additional driving term. The calibration of a CZ gate based on the siZZle interaction requires optimizing the drive parameters such that the total ZZ-induced phase accumulation during the gate operation equals $\pi/4$.

Pulse sequence for siZZle gate calibration is shown in Fig. ~\ref{siZZle_Calibration}(a). The calibration of siZZle requires optimizing the drive parameters such that the total ZZ-induced phase accumulation during the gate operation equals $\pi/4$. This involves first tuning up the frequencies of the off-resonant drives to achieve optimal detunings $\Delta_{0,d}$ and $\Delta_{1,d}$ from the qubits' transitions, selecting both optimal amplitudes ($\Omega_0$ and $\Omega_1$) and phase difference ($\Delta = \phi_0 - \phi_1)$ between the off-resonant drives to maximize the $\tilde{\nu}_{ZZ}$ rate, and finally working the gate duration out of the optimal $\tilde{\nu}_{ZZ}$ rate. 

We set the two drive amplitudes to be relatively equal to $\Omega_{control} = r ~\Omega_{target}$, for maximum $\tilde{\nu}_{ZZ}$ while observing a clean interaction, and $r$ here is the ratio between the amplitudes of the single-qubit $X_{\pi}$ pulses for the control to the target qubits. We have found this relation takes into account the asymmetries between the two drive amplitudes, introduced by cabling or room-temperature electronics. The relative phase between the two drives was found to be near-optimal in our setup and was set to be $\Delta = \phi_0 - \phi_1 = 0$ during CZ gate calibration (See supplementary materials for more details).

Fine calibrated native ZZ interactions from the basis of the algorithm and each interaction is more precisely measuring the ZZ rate at every point close toe the course calibrated frequency obtained from the 2D map shown in Fig. ~\ref{siZZle_Calibration}. We then fix the Stark pulse duration at $1~\mu$s and use the Hamiltonian tomography sequence in Fig. ~\ref{siZZle_Tuneup}, recording the differential phase accumulation on the target qubit on $\langle Y \rangle$ basis when the control qubit is initialized in either the ground or excited state. This entire mapping procedure takes roughly 12 hours and yields the background for the interaction landscape over a wide range of parameters. Following this, we analyze both the target qubit (in Fig. ~\ref{siZZle_Calibration}(a)) and control qubit (in Fig. ~\ref{siZZle_Calibration}(b)) dynamics to identify the optimal parameters for high-contrast, coherent interactions. Next, we perform fine calibration by sweeping the measured ZZ rate as function of the drive amplitude (Fig. ~\ref{siZZle_Fine_Tuneup}(a)) and drive frequency (Fig. ~\ref{siZZle_Fine_Tuneup}(b)).

\begin{figure*}[t]
  \centering
  \includegraphics[width=1.0\textwidth]{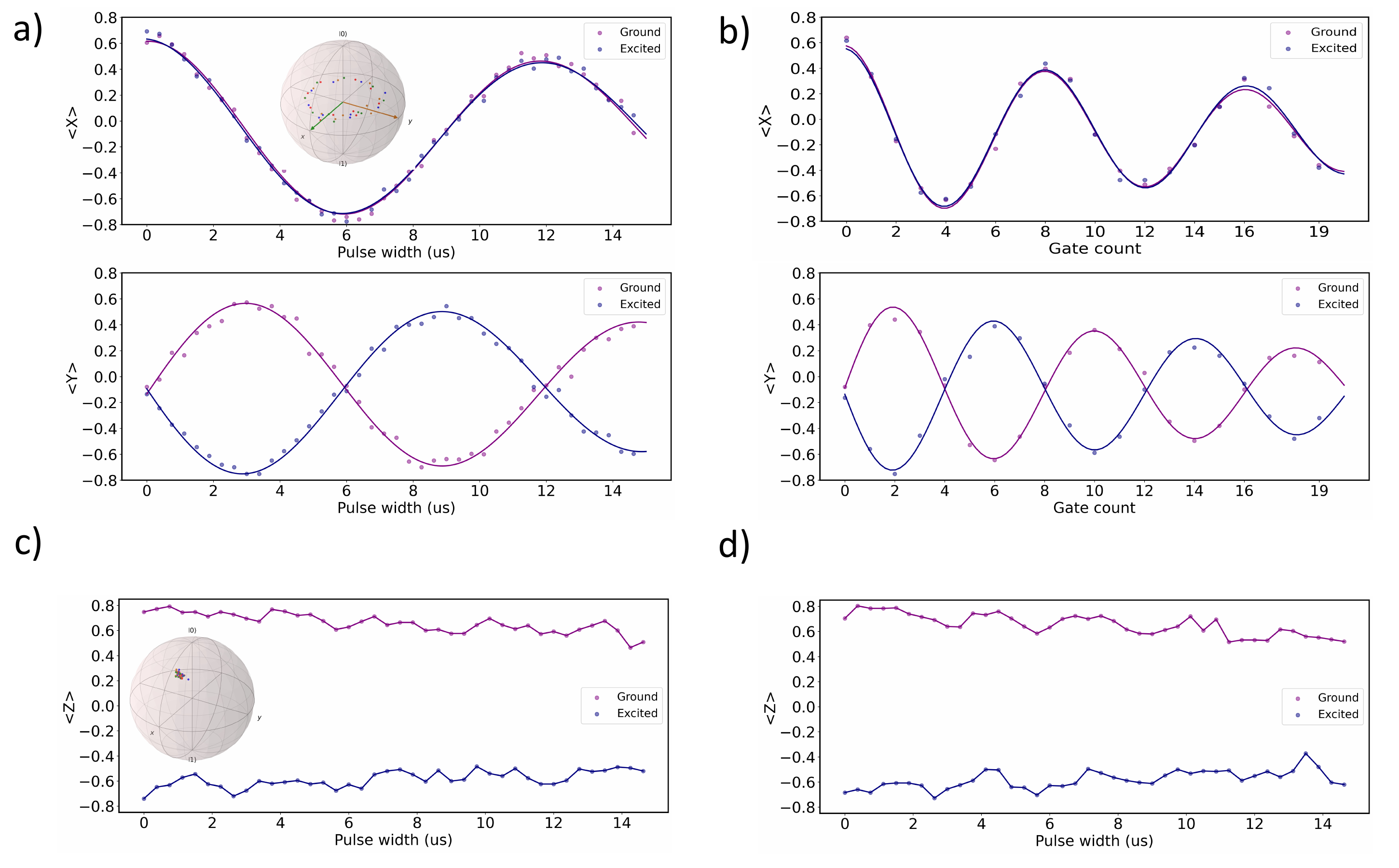}
  \caption{
        Two-qubit ZZ-induced phase accumulation on target qubit for calibrating a siZZle interaction on $Q_{2}$ (control) and $Q_{7}$ (target) as function of AC pulse width in (b) and as function of CZ gate count in (c) after setting up an optimal gate duration. States of the control qubit $Q_{2}$ during ZZ-induced phase accumulation on target qubit $Q_{7}$ during  $\langle X \rangle$ measurements in (d) and $\langle Y \rangle$ measurements in (e), for calibrating a siZZle gate between $Q_{2}$ and $Q_{7}$.}
  \label{siZZle_Tuneup}
\end{figure*}

\begin{figure}[H]
  \centering
   \includegraphics[width=0.9\textwidth]{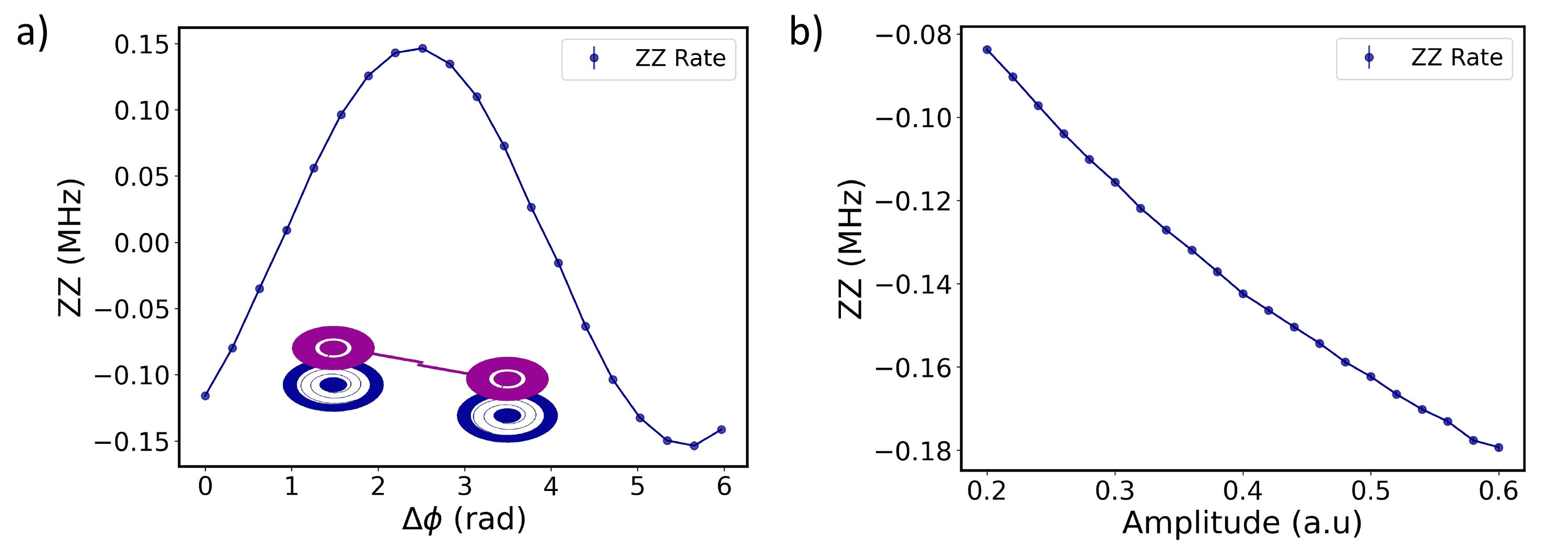}
  \caption{(a) Measured modulation of the ZZ interaction rate as a function of the relative phase difference $\Delta = \phi_0 - \phi_1$ between the two off-resonant drives. (b) ZZ interaction strength as a function of equal-amplitude off-resonant drives applied to both qubits, used to extract the optimal amplitude.}
  \label{siZZle_Fine_Tuneup}
\end{figure}

The calibration procedure continues by selecting an initial set of Stark drive parameters. Specifically, a drive frequency and amplitude are first selected to observe an initial ZZ interaction. This assessment is carried out using Hamiltonian tomography followed by repeated gate'tomography, as illustrated in Fig. ~\ref{siZZle_Tuneup}(a) and (b), respectively. Non-optimal parameters typically lead to unstable or noisy oscillations in the expectation values. In such cases, the parameters are iteratively adjusted until clean, stable oscillations are observed. Once stability is achieved, an automated fitting routine is used to extract the optimal gate duration to proceed to the calibration of a CZ entangling gate used for tuning up a CNOT gate for preparing Bell states \cite{cao2024agents, alghadeer2024psitrum, alghadeer2022psitrum}.

The siZZle gates sequence is implemented using two off-resonant Stark drives, with interleaved and final single-qubit $\pi$ pulses to cancel unwanted single-qubit phase accumulation as shown in Fig. ~\ref{siZZle_Calibration}(a). In Fig. ~\ref{siZZle_Tuneup}(a), Pulse width Hamiltonian tomography on the target qubit is shown to extract the ZZ interaction strength: the duration of the Stark pulse is swept while monitoring the phase evolution of the target qubit. A dashed line on the the single-qubit pulse on the control in Fig. ~\ref{siZZle_Calibration}(a) denotes that the experiment is run both with and without exciting the control. In Fig. ~\ref{siZZle_Tuneup}(b), repeated gate Hamiltonian tomography on the target qubit is shown: this experiment now uses fixed-duration two-qubit pulses and repetition blocks to more precisely calibrate the accumulated ZZ phase, which can then be used to implement an entangling gate. States of the control qubit $Q_{2}$ during ZZ-induced phase accumulation on target qubit $Q_{7}$ are shown in in Fig. ~\ref{siZZle_Tuneup}(c) and (d).

\begin{figure}[H]
  \centering
  \includegraphics[width=1.0\textwidth]{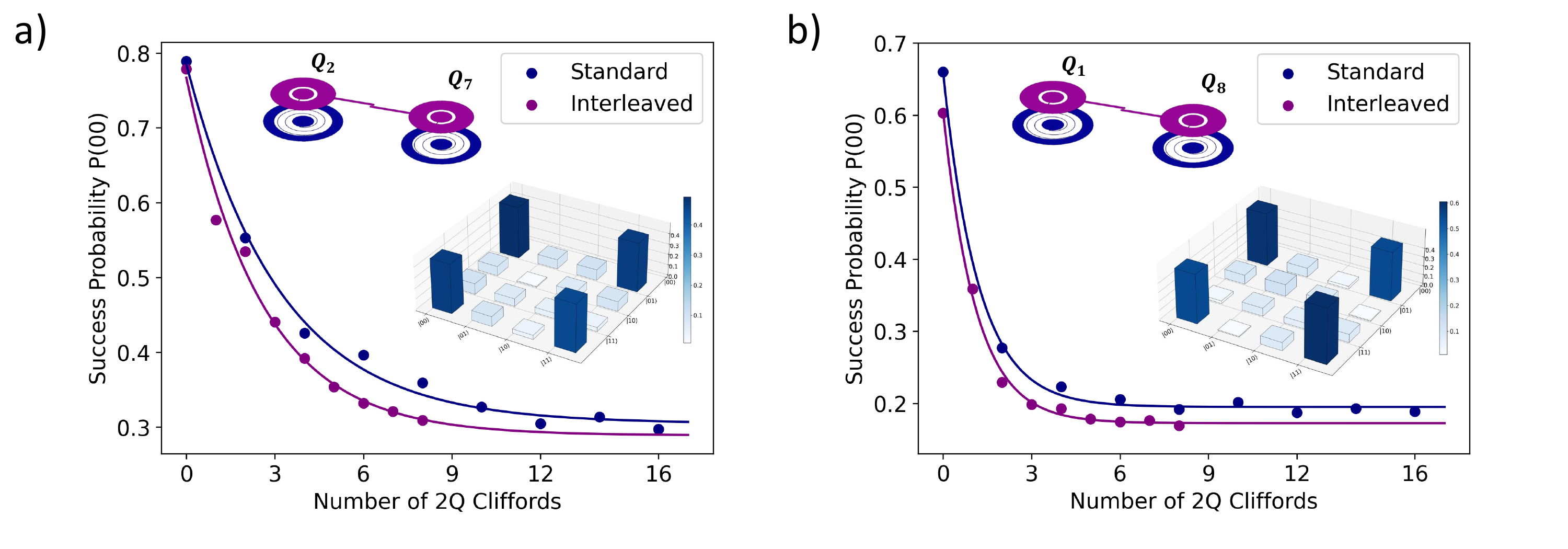}
  \caption{
        Two-qubit randomized benchmarking and Bell state preparation between $Q_{2}$ and $Q_{7}$ in (a), and between $Q_{1}$ and $Q_{8}$ in (b), demonstrating a two-qubit entanglement in the lattice.}
  \label{siZZle_Gate}
\end{figure}

We verified the calibrated siZZle technique by interleaving it into a two-qubit randomized benchmarking (RB) sequence, performed on two pairs as shown in Fig. ~\ref{fig:layout}. The single-qubit gates here were optimally calibrated with 20~ns duration (consisting of 16~ns Blackman envelope with 4~ns buffer) of $X_{\pi/2}$ physical gates, combined with derivative removal gate (DRAG) pulse shaping \cite{motzoi2009simple} and virtual Z gates \cite{mckay2017efficient}. Single-shot readout was also optimized during all two-qubit RB experiments with a readout time of 3~us. The performance of the CZ gates was further complemented by the direct preparation of entangled Bell states between two qubits (in Fig. ~\ref{siZZle_Gate}(a) and (b)). For example, the first pair in Fig. ~\ref{siZZle_Gate}(a) consisting of $Q_{2}$ (control) and $Q_{7}$ (target), we achieve CZ gate fidelity of $95.15 \pm 1.76~\%$ for a total gate time of $\tau_{g} = 3.266 ~\mu s$, resulting in an average Bell state fidelity of $93.56\%$ measured by two-qubit state tomography. For the second pair in Fig. ~\ref{siZZle_Gate}(b) consisting of $Q_{1}$ (control) and $Q_{8}$ (target), we achieve CZ gate fidelity of $96.44 \pm 1.78~\%$ for a total gate time of $\tau_{g} = 2.623 ~\mu s$, resulting in an average Bell state fidelity of $90.0\%$ measured by two-qubit state tomography. 



\subsubsection{Heisenberg interactions and DME implementation}

The figures in this section summarize the calibration and benchmarking of native Heisenberg-type interactions across selected qubit pairs. We present measured Pauli transfer matrices (PTMs) for ZZ, XX, and YY couplings on pairs QA–QB and QA–QD, directly compared with analytic predictions. 

\begin{figure}[H]
    \centering
    \includegraphics[width=0.92\linewidth]{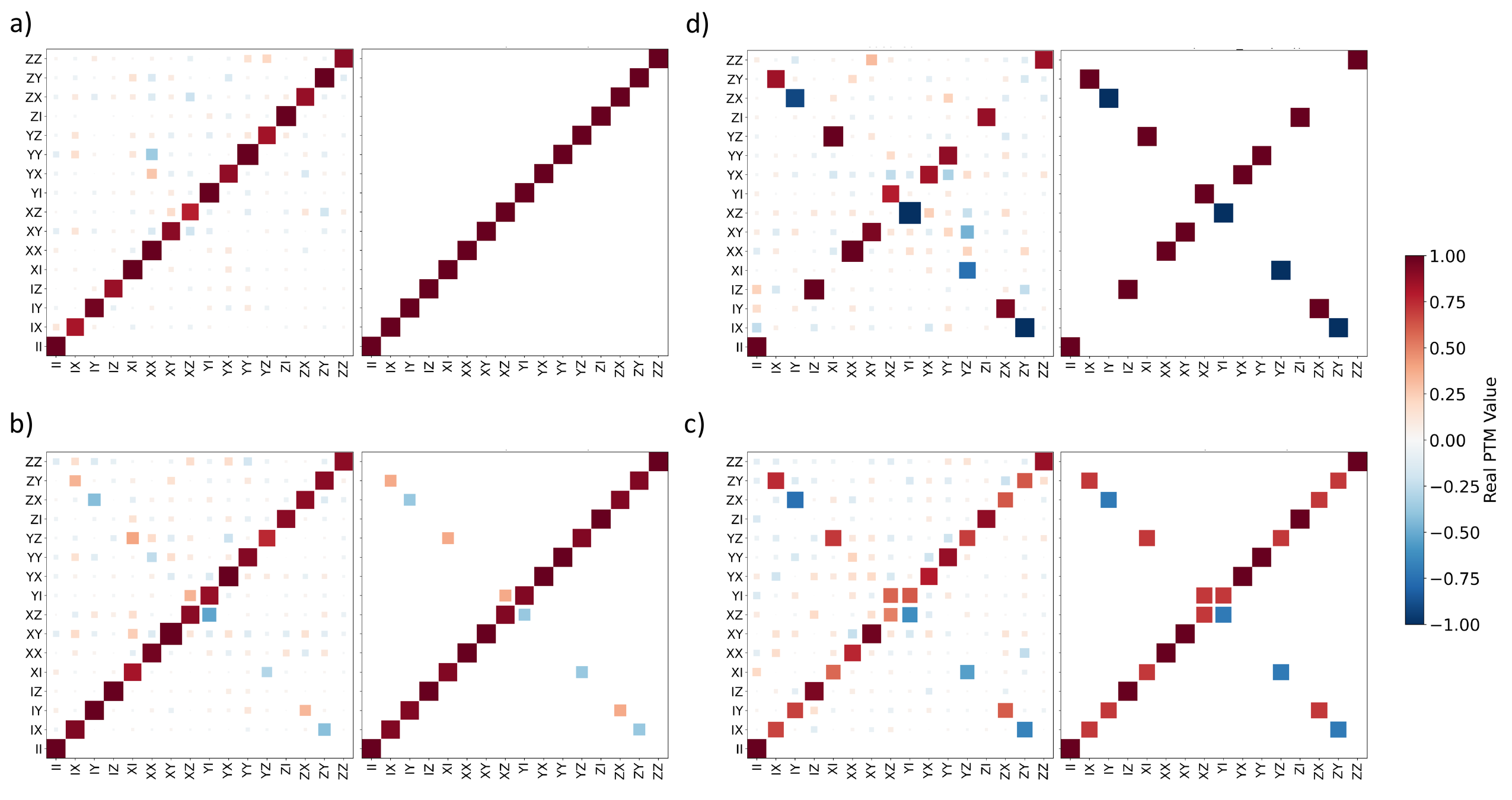}
    \caption{Measured and analytic Pauli transfer matrix (PTM) results of ZZ interaction on pair QA-QB at angle a) $\phi=0$, b) $\phi=\pi/8$, c) $\phi=\pi/4$ and d) $\phi=\pi/2$. The average process fidelity of each is a) $95.96\%$, b) $99.08\%$, c) $91.48\%$ and d) $96.98\%$.}
    \label{ZZ_AB}
\end{figure}

\begin{figure}[H]
    \centering
    \includegraphics[width=0.92\linewidth]{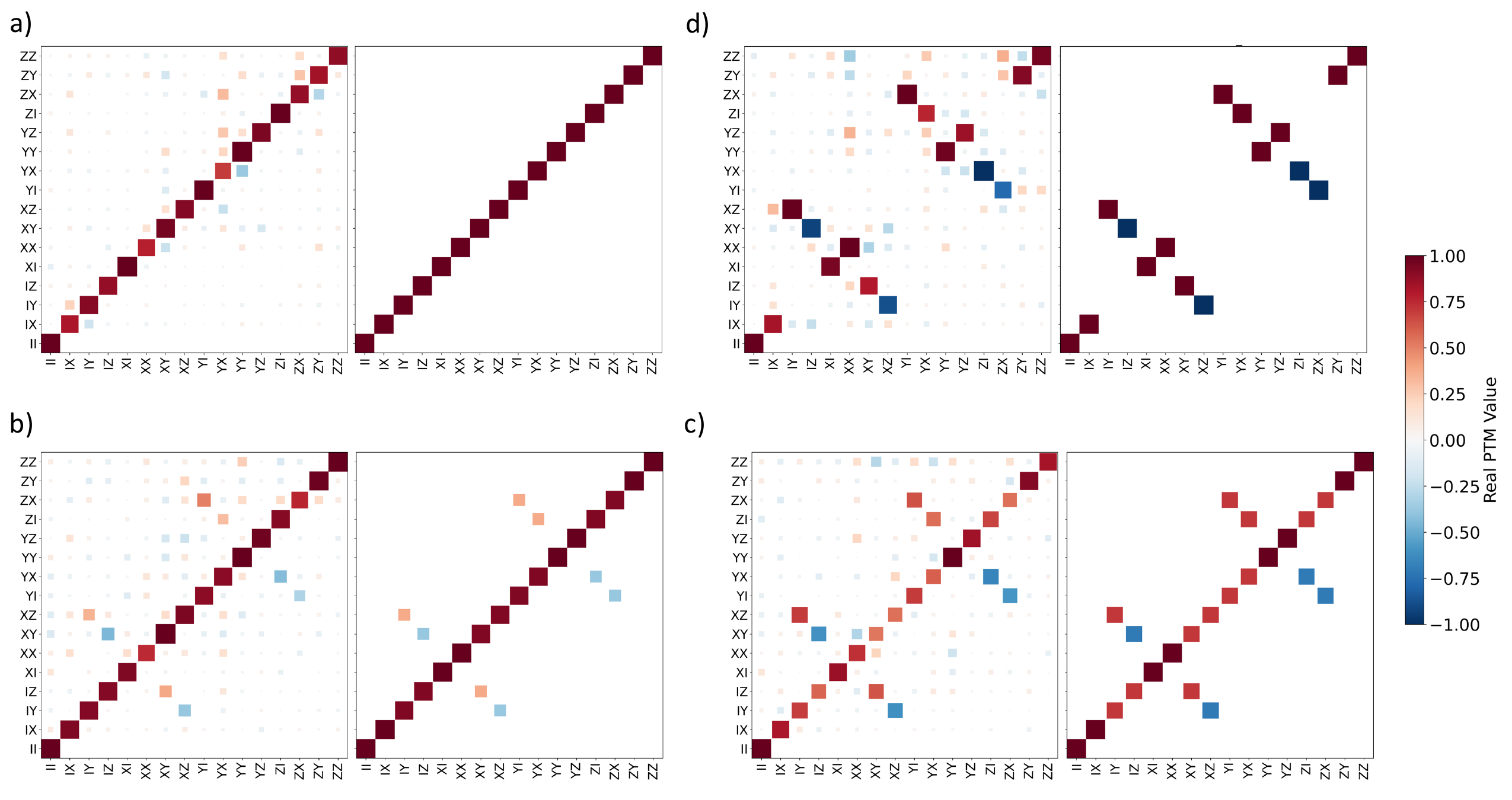}
    \caption{Measured and analytic Pauli transfer matrix (PTM) results of XX interaction on pair QA-QB at angle a) $\phi=0$, b) $\phi=\pi/8$, c) $\phi=\pi/4$ and d) $\phi=\pi/2$. The average process fidelity of each is a) $93.27\%$, b) $97.76\%$, c) $90.06\%$ and d) $94.14\%$.}
    \label{XX_AB}
\end{figure}

The extracted process fidelities, typically in the 90–99\% range, confirm that the implemented dynamics closely track the intended unitaries across evolution angles $\phi=0,\pi/8,\pi/4,\pi/2$. We further benchmark composite circuits and standard entangling operations such as CNOT and SWAP gates (Fig.~\ref{Circuit_A_AB_AD}), highlighting both their achievable fidelities and the cumulative errors arising in multi-step constructions. The measured PTM matrices were analyzed through post-selection on the measurement outcomes to mitigate the effect of readout errors, following the same approach as in Refs.~\cite{Cao_2024,Caoqtr2024}.

\begin{figure}[H]
    \centering
    \includegraphics[width=0.9\linewidth]{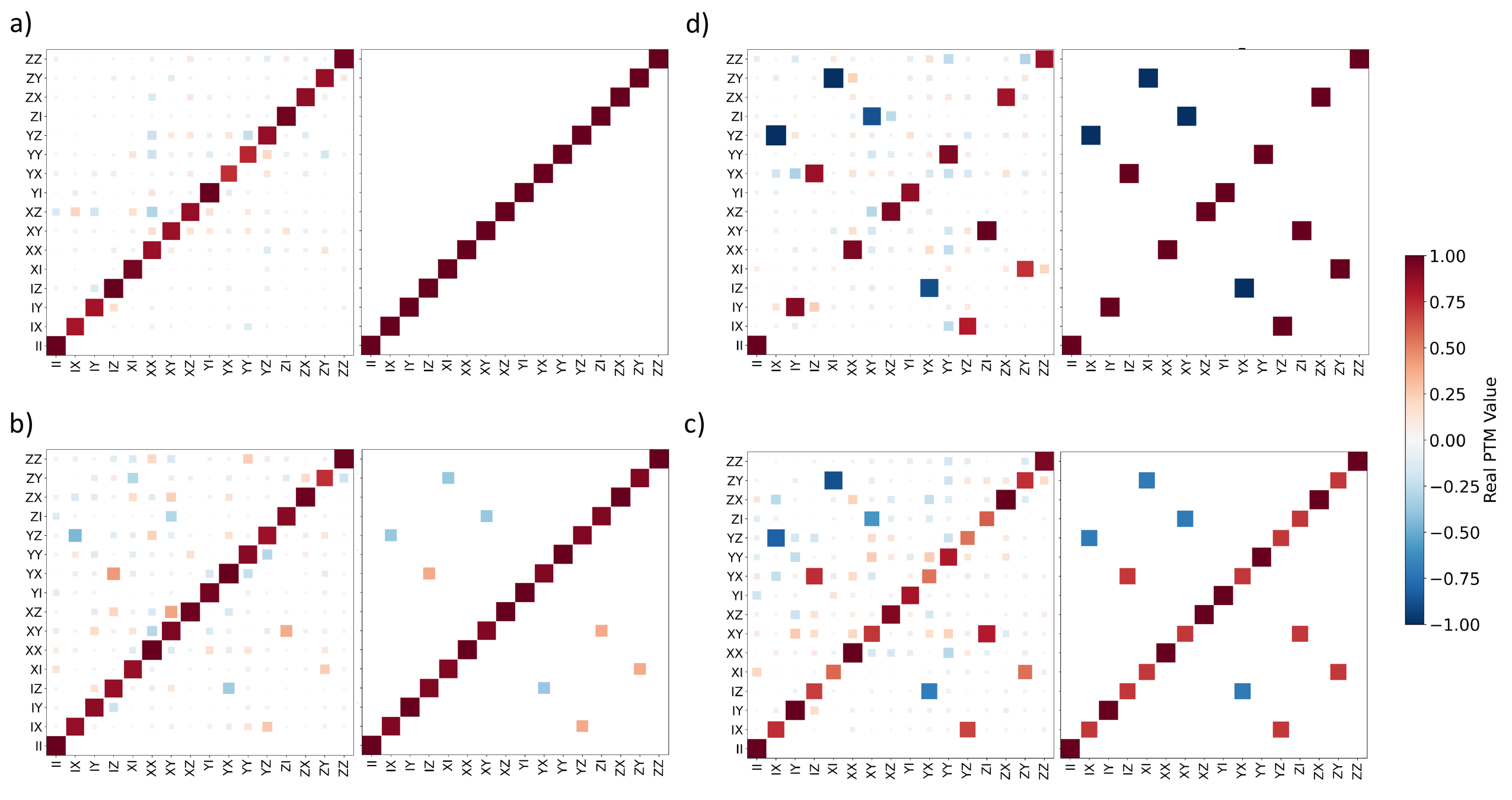}
    \caption{Measured and analytic Pauli transfer matrix (PTM) results of YY interaction on pair QA-QB at angle a) $\phi=0$, b) $\phi=\pi/8$, c) $\phi=\pi/4$ and d) $\phi=\pi/2$. The average process fidelity of each is a) $91.46\%$, b) $96.99\%$, c) $96.53\%$ and d) $93.03\%$.}
    \label{YY_AB}
\end{figure}


\begin{figure}[H]
    \centering
    \includegraphics[width=0.92\linewidth]{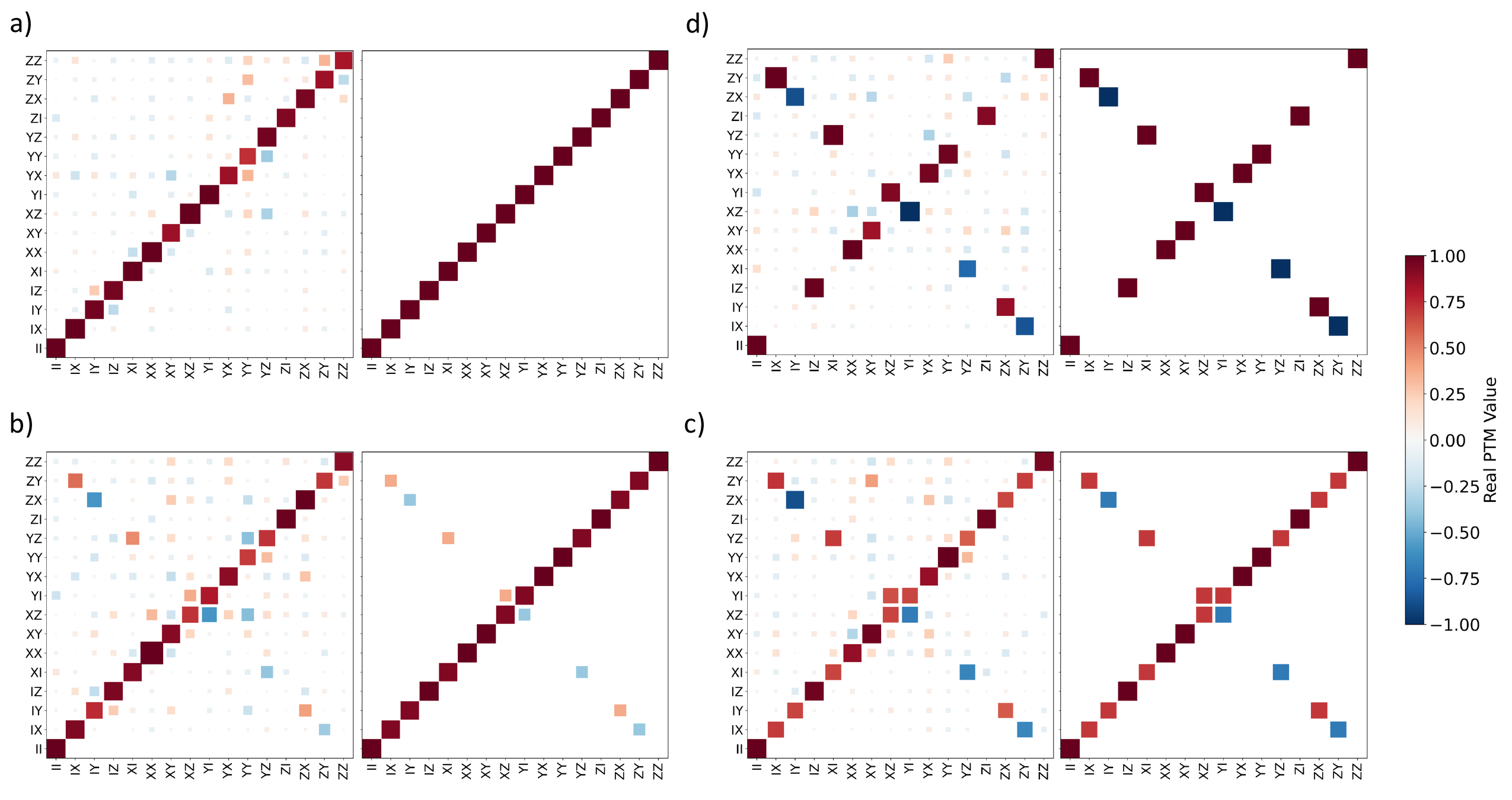}
    \caption{Measured and analytic Pauli transfer matrix (PTM) results of ZZ interaction on pair QA-QD at angle a) $\phi=0$, b) $\phi=\pi/8$, c) $\phi=\pi/4$ and d) $\phi=\pi/2$. The average process fidelity of each is a) $95.84\%$, b) $96.04\%$, c) $97.45\%$ and d) $96.74\%$.}
    \label{ZZ_AD}
\end{figure}

\begin{figure}[H]
    \centering
    \includegraphics[width=0.92\linewidth]{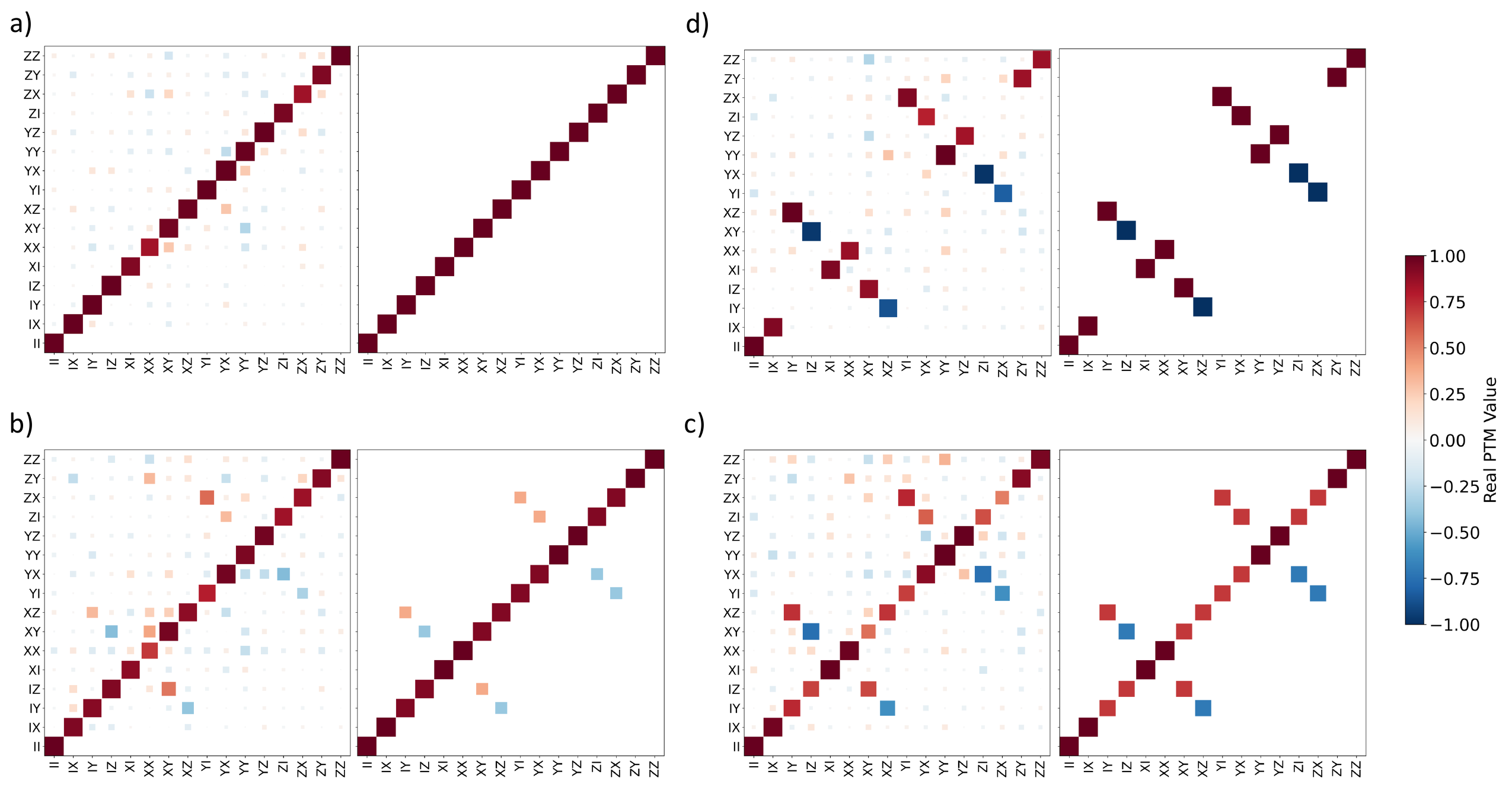}
    \caption{Measured and analytic Pauli transfer matrix (PTM) results of XX interaction on pair QA-QD at angle a) $\phi=0$, b) $\phi=\pi/8$, c) $\phi=\pi/4$ and d) $\phi=\pi/2$. The average process fidelity of each is a) $98.89\%$, b) $96.03\%$, c) $98.96\%$ and d) $93.12\%$.}
    \label{XX_AD}
\end{figure}

\begin{figure}[H]
    \centering
    \includegraphics[width=0.92\linewidth]{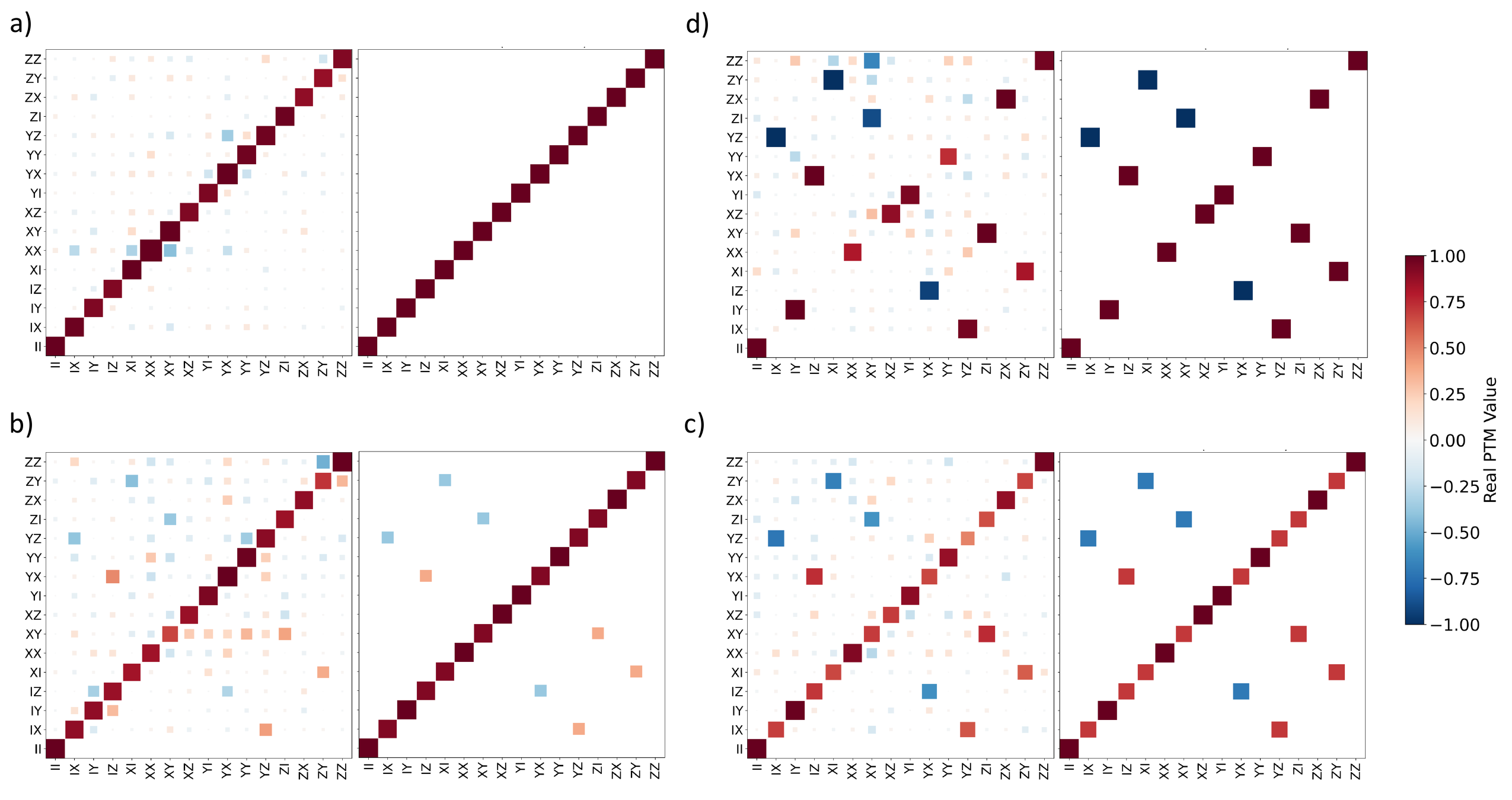}
    \caption{Measured and analytic Pauli transfer matrix (PTM) results of YY interaction on pair QA-QD at angle a) $\phi=0$, b) $\phi=\pi/8$, c) $\phi=\pi/4$ and d) $\phi=\pi/2$. The average process fidelity of each is a) $99.57\%$, b) $94.64\%$, c) $93.55\%$ and d) $95.68\%$.}
    \label{YY_AD}
\end{figure}

\clearpage

\newpage

\begin{figure}[H]
    \centering
    \includegraphics[width=1.0\linewidth]{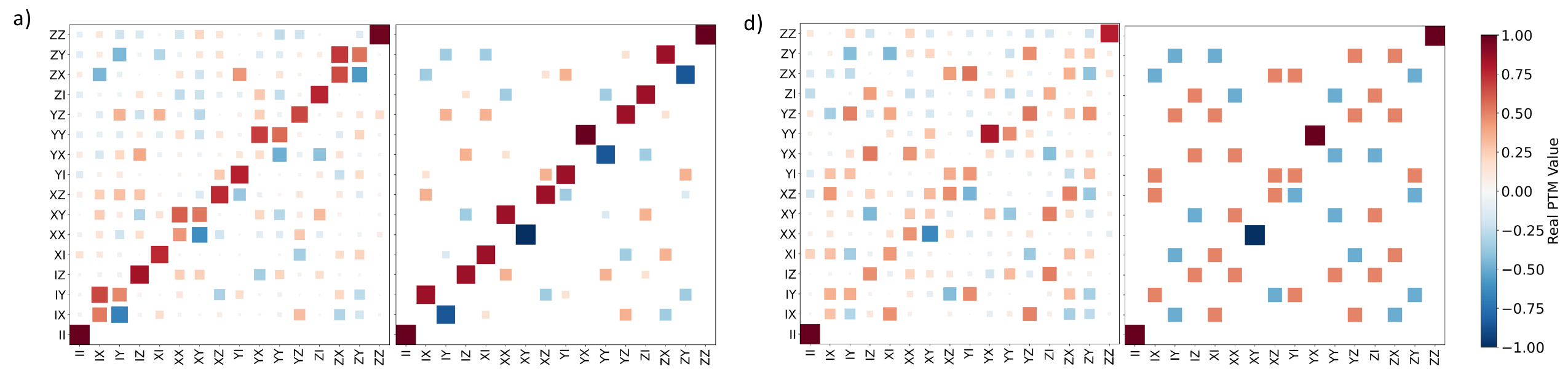}
    \caption{Measured and analytic Pauli transfer matrix (PTM) results for Circuit A angle a) $\phi=\pi/8$ and b) $\phi=\pi/4$. The average process fidelity of each is a) $86.11\%$ and b) $82.76\%$.}
    \label{Circuit_A_AB_AD}
\end{figure}

\begin{figure}[H]
    \centering
    \includegraphics[width=1.0\linewidth]{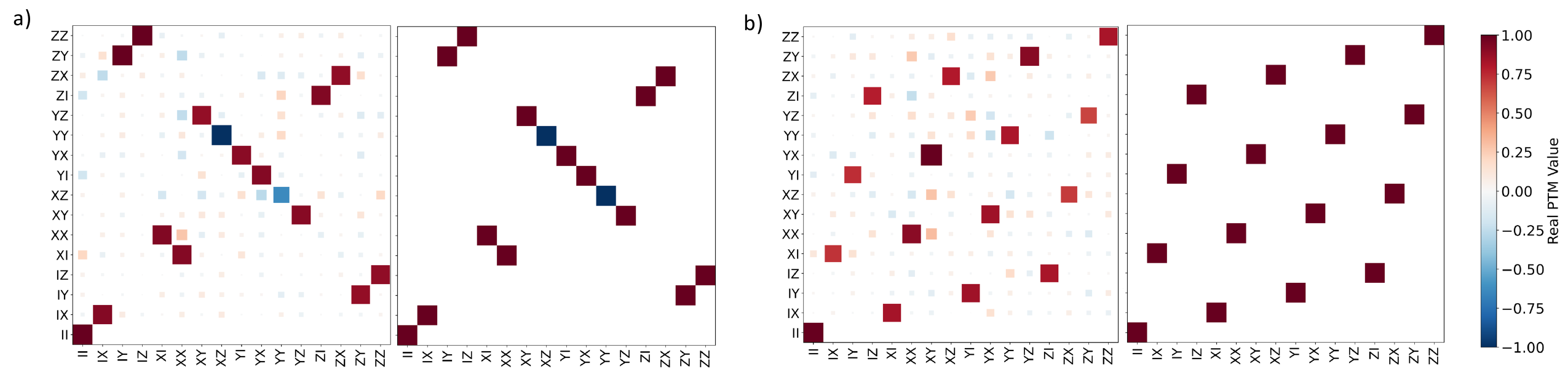}
    \caption{Measured and analytic Pauli transfer matrix (PTM) results for a) CNOT gate and b) SWAP based on three CNOT gates. The average process fidelity of each is a) $93.05\%$ and b) $86.72\%$.}
    \label{Circuit_A_AB_AD}
\end{figure}

\subsubsection{DBAC energy measurements and full demonstration}

This section present the full measured energy trajectories obtained from applying DBAC circuits A–C in Table.\ref{tab:k_m_values_M1} at $\phi=\pi/4$, with initialization states parameterized as $\ket{\psi}=R_X(\theta)\ket{0}$. For each circuit, we extract the expectation values $E_i$ corresponding to the instruction qubit (the qubit subject to cooling) and the data qubits that provide the auxiliary resources. 

The sign of the extracted $E_i$ arises from our convention of expressing qubit energies in terms of expectation values of Pauli operators with eigenvalues $\pm 1$. In this representation, negative $E_i$ corresponds to population inversion or coherent phase accumulation relative to the chosen measurement axis, consistent with the oscillatory dynamics of the DBAC protocol. What matters for cooling is the relative reduction in the instruction qubit’s energy magnitude, balanced by a compensating increase in the data qubits, which reflects the redistribution of entropy across the system. Importantly, these negative values do not represent physical energies below the ground state, but rather the chosen operator convention for tracking coherent dynamics.

\begin{figure}[H]
    \centering
    \includegraphics[width=0.9\linewidth]{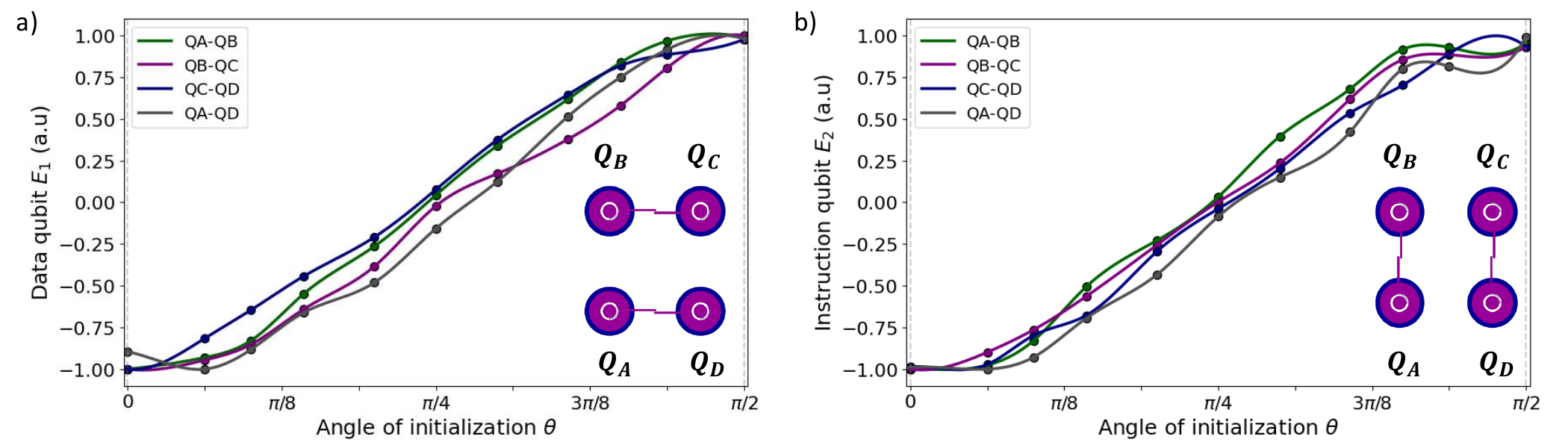}
    \caption{Circuit A: Measurements of energy cooling after applying DBAC circuits on qubit for $\phi=\pi/4$ and with different initialisation angle theta, where $\ket \psi = R_X(\theta)\ket 0$. $E_{1}$: green QA, purple QB, blue QC and gray QA. $E_{2}$: green QB, purple QC, blue QD and gray QD. $E_{1}$ for instruction qubit (main qubit to be cooled), and $E_{2}$ for data qubit.}
    \label{fig:enter-label}
\end{figure}

Circuit A represents the minimal two-qubit implementation, where the instruction qubit exhibits a clear reduction in energy while the accompanying data qubit absorbs the excess, consistent with entropy redistribution. Circuit B extends the protocol to three qubits and demonstrates improved cooling performance, as the additional data qubit results in further cooling. Circuit C generalizes the protocol to four qubits, consistent with the anticipated recursive enhancement in cooling depth.

\begin{figure}[H]
    \centering
    \includegraphics[width=1\linewidth]{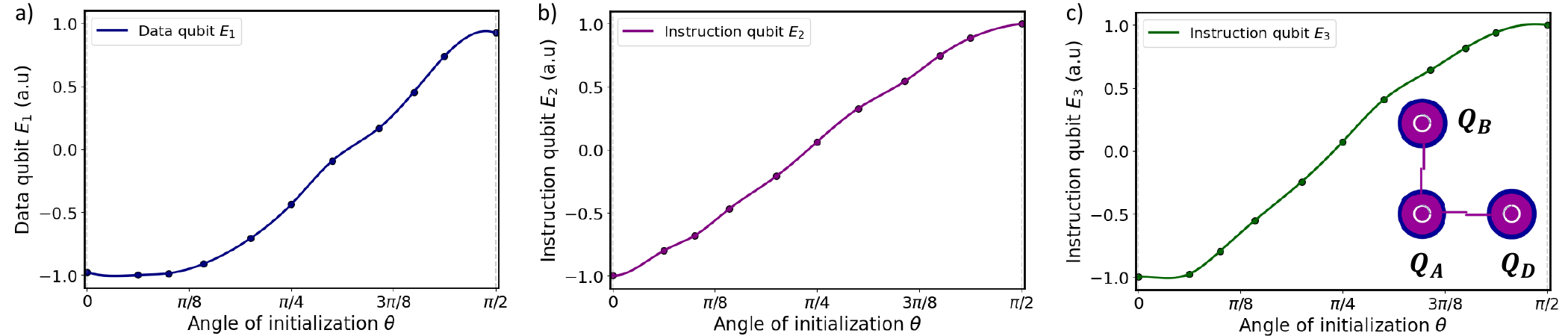}
    \caption{Circuit B: Measurements of energy cooling after applying DBAC circuits on qubit for $\phi=\pi/4$ and with different initialisation angle theta, where $\ket \psi = R_X(\theta)\ket 0$. $E_{1}$: blue QA, $E_{2}$: purple QB, and $E_{3}$: green QD. $E_{1}$ for instruction qubit (main qubit to be cooled), and $E_{2}$ and $E_{3}$ for data qubits.}
    \label{fig:enter-label}
\end{figure}

 Across all three circuits, the dependence on initialization angle $\theta$ aligns with analytic predictions: qubits initialized closer to $\ket{0}$ converge rapidly to the ground state, while states initialized near $\ket{1}$ highlight the intrinsic performance limit of finite-step DBAC. These measurements constitute the full experimental demonstration of DBAC on our superconducting lattice, establishing its capability to coherently suppress excitations without tomography or measurement feedback.

\begin{figure}[H]
    \centering
    \includegraphics[width=0.8\linewidth]{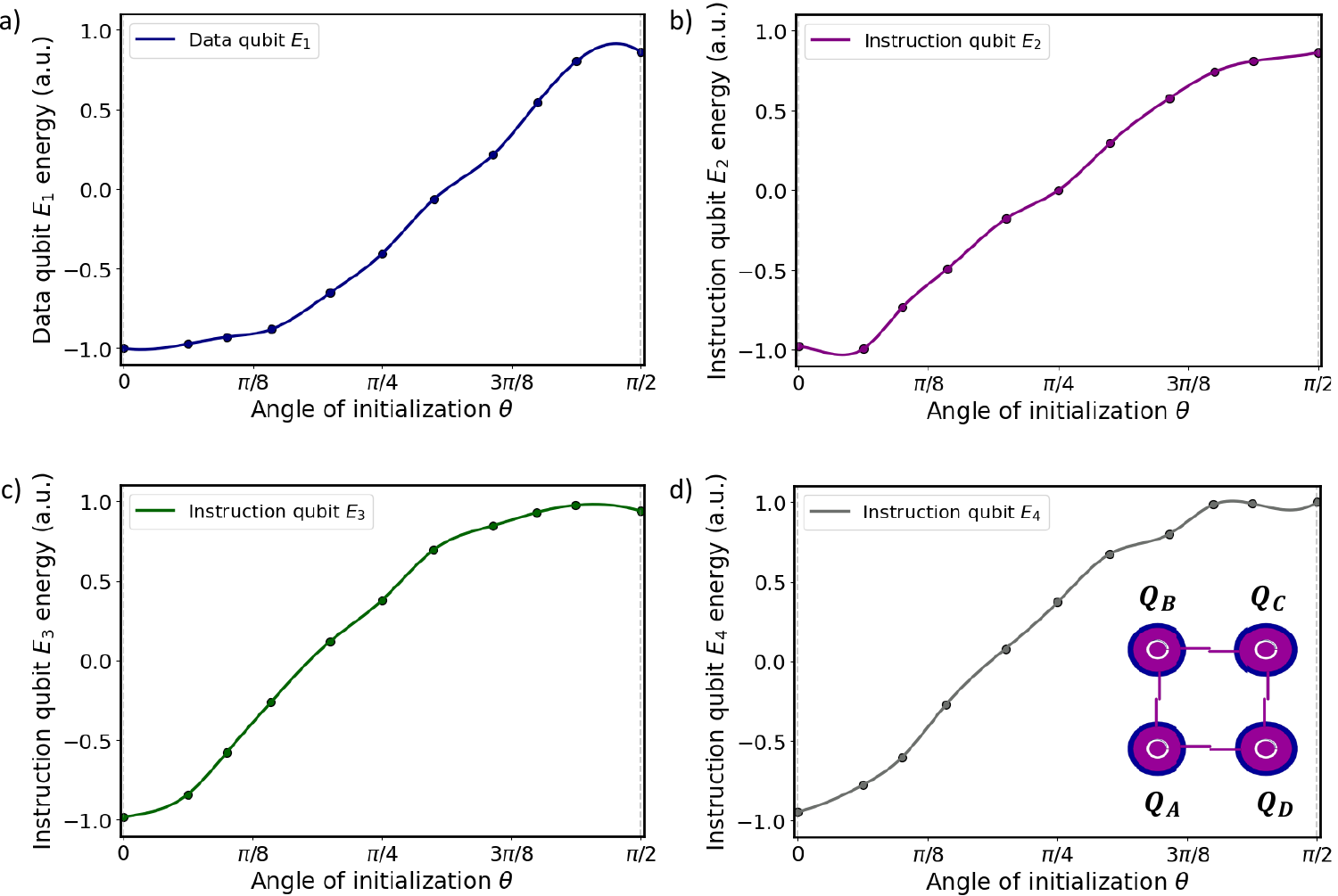}
    \caption{Circuit C: Measurements of energy cooling after applying DBAC circuits on qubit for $\phi=\pi/4$ and with different initialisation angle theta, where $\ket \psi = R_X(\theta)\ket 0$. $E_{1}$: blue QA, $E_{2}$: purple QB, $E_{3}$: green QC and $E_{4}$: gray QD. $E_{1}$ for instruction qubit (main qubit to be cooled), and $E_{2}$, $E_{3}$  and $E_{4}$ for data qubits.}
    \label{fig:enter-label}
\end{figure}
\clearpage
\onecolumngrid
\section{Additional discussion in context of other works}
\label{app:E}
\subsection{Comparison to heat-bath algorithmic cooling}

Algorithmic cooling (AC) was first introduced in \cite{schulman99} as a quantum counterpart analogue of the classical Carnot engine. 
The procedure was demonstrated in NMR systems, where qubits correspond to nuclear spins. 
The setting of AC involves an ensemble of $n$ identical qubits with initial polarization $\epsilon_0$, in other words, the qubits are initialised to mixed state:
\begin{align}
    \rho(\epsilon_0) = \frac{1}{2} \begin{pmatrix}
1 + \epsilon_0 & 0 \\
0 & 1 - \epsilon_0
\end{pmatrix}
\end{align}
The core idea of AC is to redistribute entropy from the target qubit to other qubits via a compression unitary $U_c$, and subsequently discard the computational qubits that receive entropy and heat up. 
In heat bath algorithmic cooling (HBAC) where there is a heat bath, instead of discarding the heated up computational qubits, we reinitialize the qubits through thermal contact with a bath typically biased toward the ground state with polarisation $\epsilon_b$~\cite{Park2016}.
The channel for HBAC is thus as follows:
\begin{align}
    \rho_{\epsilon_0}^{\otimes n} \to \text{Tr}_{2,3,\dots,n}(U_c\rho_{\epsilon_0}^{\otimes n} U_c^\dagger)\otimes\rho_{\epsilon_b}^{\otimes n-1}
\end{align}

This inspires the DBAC channel:
\begin{align}
    \rho_{\text{DBAC}}^{\otimes n} \to \text{Tr}_{2,3,\ldots, n}(U_D\rho_{\text{DBAC}}^{\otimes n}U_D^\dagger)
\end{align}
where $U_D$ comprises rounds of coupling between the target qubit and the instruction qubits and wraps around the Hamiltonian evolution in order to approximate the action of the unitary $U_{\text{DBAC}}= e^{it\hat{H}}e^{it\rho_{\text{DBAC}}}e^{-it\hat{H}}$ that defines a DBAC step.
Note, however that unlike in HBAC, where qubits are resetted to the initial thermal state, to perform a new DBAC step we need to use copies of the updated state, e.g. have $n^2$ initial states and perform the above DBAC step $n$ times in order to be able to obtain 1 copy of a state after 2 steps of DBAC.

Like HBAC, DBAC is also an iterative cooling algorithm, but it is designed for general systems that might lack access to a thermal bath. 
DBAC operates on copies of pure states and focuses on cooling a single qubit using a simple, hardware-native partial SWAP decomposition. 
In contrast, HBAC handles arbitrary mixed states, targets an entire ensemble, and relies on more complex unitary operations including full SWAP operations and Toffoli gates. 
Ultimately, DBAC is an approach distinct from HBAC, even though it shares the same setting and is likely implementable with the same (or weaker) control capabilities as in HBAC. 
It emphasizes practicality and low overhead in thermally isolated systems, whereas HBAC makes use of thermalization in open systems to achieve cooling.

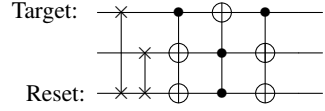
\begin{figure}
    \centering
    \scalebox{1.0}{
\Qcircuit @C=1.0em @R=0.8em @!R { \\
	 	\nghost{{q}_{0} :  } & \lstick{\text{Target: } } & \qswap & \qw & \ctrl{2} & \targ & \ctrl{2} & \qw & \qw\\
	 	\nghost{{q}_{1} :  } & \lstick{} & \qw & \qswap & \targ & \ctrl{-1} & \targ & \qw & \qw\\
	 	\nghost{{q}_{2} :  } & \lstick{\text{Reset:}  } & \qswap \qwx[-2] & \qswap \qwx[-1] & \targ & \ctrl{-1} & \targ & \qw & \qw\\
\\ }}
    \caption{Algorithmic cooling circuit using PPA, where the entropy is compressed into the reset qubit. For HBAC, the reset qubit will come into contact with the heatbath and cool down.}
    \label{circuit:ac}
\end{figure}

\subsection{Unitary synthesis methods for DME}
We presented an alternative experimental implementation of DME, differing from that of Ref.~\cite{kjaergaard} in that we used hardware-native gates more directly while Ref.~\cite{kjaergaard} used a decomposition to 3 CZ gates of Heisenberg interactions required for each DME step.
The approach of Ref.~\cite{kjaergaard} is better suited for fault-tolerant quantum computers as CZ gates are transversal for many error-correcting codes while our native gates are usually not.
On the other hand, for  experiments without error correction, it seems generally easier to use the native interactions as presented in the main text.

We note that Ref.~\cite{kjaergaard} demonstrated measurement emulation which  necessitates prior knowledge of the instruction state but then allows to perform more DME steps without requiring more copies of the instruction qubits.
For the task of removing coherence, this could be explored too but the ultimate performance would depend on the quality of the prior knowledge of the input states.

\subsection{Prospects of AC to reset qubits}

In this subsection we review resetting qubits, i.e., mapping registers in arbitrary states to always on state $\ket 0$.
This can be though of as the result of acting with an irreversible channel and indeed this is the standard approach to qubit reset.

For example, in Ref.~\cite{Geerlings_2013}, a double-drive reset protocol was applied to a transmon qubit in a 3D cavity, achieving ground state preparation with 99.5\% fidelity in under 3 $\mu s$. 
Similarly, ~\cite{Magnard_2018} presented a microwave-induced reset scheme for a transmon-resonator system, achieving less than 0.2\% residual excitation in 500 ns. 
These hardware-based methods provide fast and high-fidelity resets but are tailored to specific architectures.

AC protocols could be thought of as performing reset but the key difference is that for quantum computing application one wants to make rounds of operations on a single qubit, i.e. reset a single register and multiple copies of the same quantum states are not typically available.
In contrast, AC protocols require the input of $n$ copies of the same register which limits their applicability as a reset protocol for quantum computing.

Thus, for the task of resetting qubits DBAC, and all AC protocols, are constrained by the requirement for multiple copies of the same state, which is an unlikely scenario in real-world applications.
However, one could imagine that tailored methods might face a limitation occurring from some uncharacterized source of error.
If in some quantum computing application a reset of ultra-high quality would be required then one could imagine using that erroneous protocol to prepare $n$ qubits and employ AC to increase their quality.

We remark, that for resetting qubits based on multiple input qubits there are other known protocols which share the same setting with AC but are not considered explicitly as AC.
For example the CEM protocol, proposed by Cirac, Ekert, and Macchiavello ~\cite{Cirac1999_purification}, can improve the purity of a single qubit by operating on multiple degraded copies via symmetric subspace projection. 
Despite its significance, experimental realizations of the CEM protocol remain challenging due to the precision and control required. 
Recent NMR-based implementations~\cite{cemnmr} have validated its effectiveness, showing a purification factor of 1.25 under specific conditions. 
We remark that the CEM protocol can be generalized~\cite{Childs2025streamingquantum} which has been recently used as subroutine of a  implementing dynamic quantum algorithms in the framework of quantum dynamic programming~\cite{qdp}.

\bibliographystyle{quantum}
\bibliography{bibliography}

\end{document}